\colorlet{HighlightColor}{Yellow!30}
\title{E-unification for Second-Order Abstract Syntax}
\author{Nikolai {Kudasov}}{Innopolis University, Universitetskaya 1, Innopolis, Tatarstan Republic, Russia}{n.kudasov@innopolis.ru}{https://orcid.org/0000-0001-6572-7292}{}
\authorrunning{N. Kudasov}
\keywords{E-unification, higher-order unification, second-order abstract syntax}
\DeclareMathOperator{\meq}{\stackrel{?}{=}}
\newcommand{\metavar}[2]{{\normalfont\color{RoyalBlue}{\textsc{#1}{#2}}}}
\newcommand{\type}[1]{{\color{purple}{#1}}}
\newcommand{\con}[1]{{\color{blue}{\mathsf{#1}}}}
\newcommand{\metasubst}[1]{{\color{Plum}{#1}}}
\newcommand{\defcon}{{\con{F}}}
\newcommand{\m}{{\metavar{\textsc{m}}{}}}
\newcommand{\app}{{\con{app}}}
\newcommand{\abs}{{\con{abs}}}
\newcommand{\subst}{{\con{subst}}}
\newcommand{\named}{{\con{named}}}
\newcommand{\ord}{{\mathsf{ord}}}
\newcommand{\defemph}[1]{\emph{\textbf{#1}}}
\newtheorem*{theorem*}{Theorem}
\newtheorem*{definition*}{Definition}
\begin{document}

\maketitle

\begin{abstract}
Higher-order unification (HOU) concerns unification of (extensions of) $\lambda$-calculus and can be seen as an instance of equational unification ($E$-unification) modulo $\beta\eta$-equivalence of $\lambda$-terms. We study equational unification of terms in languages with arbitrary variable binding constructions modulo arbitrary second-order equational theories. Abstract syntax with general variable binding and parametrised metavariables allows us to work with arbitrary binders without committing to $\lambda$-calculus or use inconvenient and error-prone term encodings, leading to a more flexible framework. In this paper, we introduce $E$-unification for second-order abstract syntax and describe a unification procedure for such problems, merging ideas from both full HOU and general $E$-unification. We prove that the procedure is sound and complete.
\end{abstract}

\section{Introduction}

Higher-order unification (HOU) is a process of solving symbolic equations with functions. Consider the following equation in untyped $\lambda$-calculus that we want to solve for $\m$:
\begin{align}
  \m\;g\;(\lambda z. z\;a) \meq g\;a
\end{align}

A solution to this problem (called a unifier) is the substitution $\metasubst{\theta} = [\m \mapsto \lambda x. \lambda y. y\;x]$. Indeed, applying $\metasubst{\theta}$ to the equation we get $\beta$-equivalent terms on both sides:
\[
  \metasubst{\theta}(\m\;g\;(\lambda z. z\;a))
  = (\lambda x. \lambda y. y \; x)\;g \; (\lambda z. z\; a)
  \equiv_{\beta} (\lambda y. y\; g)\;(\lambda z. z\; a)
  \equiv_{\beta} (\lambda z. z\; a)\;g
  \equiv_{\beta} g\;a
  = \metasubst{\theta}(g\;a)
\]

Higher-order unification has many applications, including type checking~\cite{MazzoliAbel2016} and automatic theorem proving in higher-order logics~\cite{MillerNadathur2012}.
In general, HOU is undecidable~\cite{Goldfarb1981} and searching for a unifier can be rather expensive without non-trivial optimizations.
For some problems, a decidable fragment is sufficient to solve for.
For instance, Miller's higher-order pattern unification~\cite{Miller1991} and its variations~\cite{Gundry2013,ZilianiSozeau2015} are often used for dependent type inference.

Traditionally, HOU algorithms consider only one binder~--- $\lambda$-abstraction. A common justification is an appeal to Higher-Order Abstract Syntax (HOAS)~\cite{PfenningElliott1988}:
\begin{displayquote}[{Nipkow and Prehofer~\cite[Section~1]{NipkowPrehofer1998}}]
  It is well-known that $\lambda$-abstraction is general enough to represent quantification in formulae,
  abstraction in functional programs, and many other variable-binding constructs \cite{PfenningElliott1988}.
\end{displayquote}

However, HOAS has received some criticism from both programming language implementors and formalisation researchers, who argue that HOAS and its variants~\cite{Chlipala2008,WashburnWeirich2008} have some practical issues~\cite{Kmett2015_SoH, Cockx2021_blog}, such as being hard to work under binders, having issues with general recursion~\cite{Kmett2008_CR}, and lacking a formal foundation~\cite{FioreSzamozvancev2022}.

Fiore and Szamozvancev~\cite{FioreSzamozvancev2022} argue that existing developments for formalising, reasoning about, and implementing languages with variable bindings \enquote{offer some relief, however at the expense of inconvenient and error-prone term encodings and lack of formal foundations}. Instead, they suggest to consider \emph{second-order abstract syntax}~\cite{FioreHur2010}, that is, abstract syntax with variable binding and parametrised metavariables. Indeed, Fiore and Szamozvancev~\cite{FioreSzamozvancev2022} use second-order abstract syntax to generate metatheory in Agda for languages with variable bindings.

In this paper, we develop a mechanisation of equational reasoning for second-order abstract syntax. We take inspiration in existing developments for HOU and $E$-unification. Although we cannot directly reuse all HOU ideas that rely heavily on the syntax of $\lambda$-calculus, we are still able to adapt many of them, since second-order abstract syntax provides \emph{parametrised metavariables} which are similar to \emph{flex} terms in HOU.

\subsection{Related Work}

To the best of our knowledge, there does not exist a mechanisation of equational reasoning for second-order abstract syntax. Thus, we compare our approach with existing HOU algorithms that encompass equational reasoning. Snyder's higher-order $E$-unification~\cite{Snyder1990} extends HOU with first-order equational theories. Nipkow and Prehofer~\cite{NipkowPrehofer1998} study higher-order rewriting and (higher-order) equational reasoning. As mentioned, these rely on $\lambda$-abstraction and a HOAS-like encoding to work with other binding constructions. In contrast, we work with arbitrary binding constructions modulo a second-order equational theory.

Dowek, Hardin, and Kirchner~\cite{DowekHardinKirchner2000} present higher-order unification as first-order $E$-unification in $\lambda\sigma$-calculus (a variant of $\lambda$-calculus with explicit substitutions) modulo $\beta\eta$-reduction.
Their idea is to use explicit substitutions and de Bruijn indices so that metavariable substitution cannot result in name capture and reduces to \emph{grafting} (first-order substitution).
In this way, algorithms for first-order $E$-unification (such as \emph{narrowing}) can be applied.
Kirchner and Ringeissen~\cite{KirchnerRingeissen1997} develop that approach for higher-order equational unification with first-order axioms.
In our work, parametrised metavariables act in a similar way to metavariables with explicit substitutions in $\lambda\sigma$-calculus.
While it should be possible to encode second-order equations as first-order equations in $\sigma$-calculus (with explicit substitution, but without $\lambda$-abstraction and application), it appears that this approach requires us to also encode rules of our unification procedure.

As some equational theories can be formulated as term rewriting systems, a line of research combining rewrite systems and type systems exists, stemming from the work of Tannen~\cite{Tannen1988}, which extends simply-typed $\lambda$-calculus with higher-order rewrite rules.
Similar extensions exist for the Calculus of Constructions~\cite{BarbaneraFernandezGeuvers1997,Walukiewicz-Chrzkaszcz2003,Stehr2005partI,Stehr2005partII} and $\lambda\Pi$-calculus~\cite{CousineauDowek2007}. 
Cockx, Tabareau, and Winterhalter~\cite{CockxTabareauWinterhalter2021} introduce Rewriting Type Theory (RTT) which is an extension of Martin-L\"of Type Theory with (first-order) rewrite rules.
Chrz\k{a}szcz and Walukiewicz-Chrz\k{a}szcz~\cite{Chrzkaszcz-Walukiewicz-Chrzkaszcz2007} discuss how to extend Coq with rewrite rules.
Cockx~\cite{Cockx2020} reports on a practical extension of Agda with higher-order non-linear rewrite rules, based on the same ideas as RTT~\cite{CockxTabareauWinterhalter2021}.
Rewriting is especially useful in proof assistants that compare types (and terms) through \emph{normalisation by evaluation} (NbE) rather than higher-order unification.
Contrary to type theories extended with rewrite rules, our approach relies on simply-typed syntax, but allows for an arbitrary second-order equational theory, enabling unification even in the absence of a confluent rewriting system.

Kudasov~\cite{Kudasov2022} implements higher-order (pre)unification and dependent type inference in Haskell for an intrinsically scoped syntax using so-called \emph{free scoped monads} to generate the syntax of the object language from a data type describing node types. Such a definition is essentially a simplified presentation of second-order abstract syntax. Kudasov's pre-unification procedure contains several heuristics, however no soundness or completeness results are given in the preprint.

\subsection{Contribution}

The primary contribution of this paper is the introduction of $E$-unification for second-order abstract syntax and a sound and complete unification procedure. The rest of the paper is structured as follows:
\begin{itemize}
  \item In \cref{section:second-order-syntax}, we briefly revisit second-order abstract syntax, equational logic, and term rewriting \`a la Fiore and Hur~\cite{FioreHur2010}.
  \item In \cref{section:e2-unification}, we generalise traditional $E$-unification concepts of an $E$-unification problem and an $E$-unifier for a set of second-order equations $E$.
  \item In \cref{section:unification-procedure}, we define the unification procedure that enumerates solutions for any given $E$-unification problem and prove it sound.
  \item In \cref{section:completeness}, we prove completeness of our unification procedure, taking inspiration from existing research on $E$-unification and HOU.
  \item Finally, we discuss some potential pragmatic changes for a practical implementation as well as limitations of our approach in \cref{section:discussion}.
\end{itemize}

\section{Second-Order Abstract Syntax}
\label{section:second-order-syntax}

In this section, we recall second-order abstract syntax, second-order equational logic, and second-order term rewriting of Fiore and Hur~\cite{FioreHur2010}. 


\subsection{Second-Order Terms}

We start by recalling a notion of second-order signature, which essentially contains information about the syntactic constructions (potentially with variable bindings) of the object language.

\begin{remark}
  There are at least two subtly different ways to look at a syntax with bound variables.
  In particular, a term $\lambda x. t$ can be seen as
  \begin{enumerate}
    \item A binder $\lambda x$ that introduces a variable $x$, and a term $t$ in an extended context; in a uniform syntax, we then present $\lambda$-abstraction with a family of functional symbols $\con{abs}_x(t)$;
    \item A symbol $\lambda$ applied to a \emph{scoped term} $x.t$ where $(x.)$ introduces a new variable $x$ in which then can be used in the term $t$; in a uniform syntax, we then present $\lambda$-abstraction with a single functional symbol $\con{abs}(x.t)$ which takes a scoped term as its argument.
  \end{enumerate}
  Following Fiore and Hur we go with the second interpretation.
\end{remark}

A \defemph{second-order signature} \cite[Section~2]{FioreHur2010} $\Sigma = (T, O, |-|)$ is specified by a set of types $T$, a set of operators\footnote{In literature on $E$-unification, authors use the term \emph{functional symbol} instead.} $O$, and an arity\footnote{We follow the terminology of Fiore and Hur.} function $|-| : O \to (T^* \times T)^* \times T$.
For an operator $\defcon \in O$, we write $\defcon : (\type{\overline{\sigma_1}}.\type{\tau_1},\ldots,\type{\overline{\sigma_n}}.\type{\tau_n}) \to \type{\tau}$ when $|\defcon| = ((\type{\overline{\sigma_1}}, \type{\tau_1}), \ldots, (\type{\overline{\sigma_n}}, \type{\tau_n}), \type{\tau})$. Intuitively, this means that an operator $\defcon$ takes $n$ arguments each of which binds $n_i = |\type{\overline{\sigma_i}}|$ variables of types $\type{\sigma_{i,1}}, \ldots, \type{\sigma_{i,n_i}}$ in a term of type $\type{\tau_i}$.

For the rest of the paper, we assume an ambient signature $\Sigma$, unless otherwise stated.

A \defemph{typing context} \cite[Section~2]{FioreHur2010} $\Theta \mid \Gamma$ consists of metavariable typings $\Theta$ and variable typings $\Gamma$. Metavariable typings are parametrised types: a metavariable of type $[\type{\sigma_1}, \ldots, \type{\sigma_n}]\type{\tau}$, when parametrised by terms of type $\type{\sigma_1}, \ldots, \type{\sigma_n}$, will yield a term of type $\type{\tau}$.
We will write a centered dot ($\cdot$) for the empty (meta)variable context.

For example, this context has a metavariable $\m$ with two parameters and variables $x, y$:
$\Theta \mid \Gamma = (\metavar{m}{} : [\type{\sigma},\type{\sigma \Rightarrow \tau}]\type{\tau} \mid x : \type{\sigma \Rightarrow \tau}, y : \type{\sigma})$.

\begin{figure}
  \begin{mdframed}
  \begin{prooftree}
    \AxiomC{\phantom{$\Theta$}} 
    \noLine
    \UnaryInfC{$x : \type{\tau} \in \Gamma$}
    \RightLabel{variables}
    \UnaryInfC{$\Theta \mid \Gamma \vdash x : \type{\tau}$}
    \DisplayProof\quad\quad
    \AxiomC{$\m : [\type{\sigma_1}, \ldots, \type{\sigma_n}]\type{\tau} \in \Theta$}
    \noLine
    \UnaryInfC{$\text{for all $i = 1, \ldots, n$} \quad
      \Theta \mid \Gamma \vdash t_i : \type{\sigma_i}$}
    \RightLabel{metavariables}
    \UnaryInfC{$\Theta \mid \Gamma \vdash \m[t_1, \ldots, t_n] : \type{\tau}$}
  \end{prooftree}
  \begin{prooftree}
    \AxiomC{$\defcon : (\type{\overline{\sigma_1}}.\type{\tau_1}, \ldots, \type{\overline{\sigma_n}}.\type{\tau_n}) \to \type{\tau}$}
    \noLine
    \UnaryInfC{$\text{for all $i = 1, \ldots, n$} \quad
      \Theta \mid \Gamma, \overline{x_i} : \type{\overline{\sigma_i}} \vdash t_i : \type{\tau_i}$}
    \RightLabel{operators}
    \UnaryInfC{$\Theta \mid \Gamma \vdash \defcon(\overline{x_1}.t_1, \ldots, \overline{x_n}.t_n) : \type{\tau}$}
  \end{prooftree}
  \end{mdframed}
  \caption{Second-order terms in context.}
  \label{figure:second-order-terms}
\end{figure}

\begin{definition}[{\cite[Section~2]{FioreHur2010}}]
  A judgement for typed \defemph{terms} in context $\Theta \mid \Gamma \vdash - : \type{\tau}$ is defined by the rules in \cref{figure:second-order-terms}. Variable substitution on terms is defined in a usual way, see \cite[Section~2]{FioreHur2010} for details.

  Let $\Theta = (\metavar{m}{_i} : [\type{\overline{\sigma_i}}]\type{\tau_i})^{i \in \{1,\ldots,n\}}$, and consider a term $\Theta \mid \Gamma \vdash t : \type{\tau}$, and for all $i \in \{1, \ldots, n\}$ a term in extended\footnote{Here we slightly generalise the definition of Fiore and Hur by allowing arbitrary extension of context to $\Gamma, \Delta$ in the resulting term. This is useful in particular when $\Gamma$ is empty. See~\cref{definition:instantiate-axiom}.} context $\Xi \mid \Gamma, \Delta, \overline{z_i} : \overline{\sigma_i} \vdash t_i : \type{\tau_i}$.
  Then, \defemph{metavariable substitution} $t[\metavar{m}{_i}[\overline{z_i}] \mapsto t_i]^{i \in \{1, \ldots, n\}}$ is defined recursively on the structure of $t$:
\begin{align*}
  x[\metavar{m}{_i}[\overline{z_i}] \mapsto t_i]^{i \in \{1, \ldots, n\}}
  &= x \span \\
  \metavar{m}{_k}[\overline{s}][\metavar{m}{_i}[\overline{z_i}] \mapsto t_i]^{i \in \{1, \ldots, n\}}
  &= t_k[\overline{z_k} \mapsto \overline{s[\metavar{m}{_i}[\overline{z_i}] \mapsto t_i]^{i \in \{1, \ldots, n\}}}] & \\
  &\text{when $k \in \{1, \ldots, n\}$ and $|\overline{s}| = |\overline{z_k}|$} & \\
  \metavar{n}{}[\overline{s}][\metavar{m}{_i}[\overline{z_i}] \mapsto t_i]^{i \in \{1, \ldots, n\}}
  &= \metavar{n}{}[\overline{s[\metavar{m}{_i}[\overline{z_i}] \mapsto t_i]^{i \in \{1, \ldots, n\}}}]
  &\text{when $\metavar{n}{} \not\in \{\metavar{m}{_1}, \ldots, \metavar{m}{_n}\}$} \\
  \defcon(\overline{\overline{x}.s})[\metavar{m}{_i}[\overline{z_i}] \mapsto t_i]^{i \in \{1, \ldots, n\}}
  &= \defcon(\overline{\overline{x}.s[\metavar{m}{_i}[\overline{z_i}] \mapsto t_i]^{i \in \{1, \ldots, n\}}})
  \span
\end{align*}

We write $\metasubst{\theta} : \Theta \mid \Gamma \to \Xi \mid \Gamma, \Delta$ for a substitution $\metasubst{\theta} = [\metavar{m}{_i}[\overline{z_i}] \mapsto t_i]^{i \in \{1, \ldots, n\}}$.

When both $\Gamma$ and $\Delta$ are empty, we write $\metasubst{\theta} : \Theta \to \Xi$ as a shorthand for $\metasubst{\theta} : \Theta \mid \cdot \to \Xi \mid \cdot$.

For single metavariable substitutions in a larger context we will omit the metavariables that map to themselves. That is, we write $[\metavar{m}{_k}[\overline{z}] \mapsto t_k] : \Theta \mid \Gamma \to \Xi \mid \Gamma, \Delta$ to mean that $t_i = \metavar{m}{_i}[\overline{z_i}]$ for all $i \not= k$.
\end{definition}

\begin{example} Here are some examples of $\lambda$-terms in second-order syntax:
  \begin{itemize}
    \item identity function ($\lambda x. x$): $ \con{abs}(x. x)$
    \item Church numeral $c_3 = \lambda s. \lambda z. s\;(s\;(s\;z))$: $\con{abs}(s. \con{abs}(z. \con{app}(s, \con{app}(s, \con{app}(s, z)))))$
    \item swapping function ($\lambda p. \langle \pi_2\;p, \pi_1\;p \rangle$): $\con{abs}(p. \con{pair}(\con{snd}(p), \con{fst}(p)))$
    \item application of swap ($(\lambda p. \langle \pi_2\;p, \pi_1\;p \rangle)\;\langle a, b \rangle$): $\con{app}(\con{abs}(p. \con{pair}(\con{snd}(p), \con{fst}(p))), \con{pair}(a, b))$
  \end{itemize}
\end{example}

\subsection{Second-Order Equational Logic}

We now define second-order equational presentations and rules of second-order logic, following Fiore and Hur~\cite[Section 5]{FioreHur2010}. This provides us with tools for reasoning modulo second-order equational theories, such as $\beta\eta$-equivalence of $\lambda$-calculus.

An \defemph{equational presentation} \cite[Section~5]{FioreHur2010} is a set of axioms each of which is a pair of terms in context.

\begin{example}
  Terms of simply-typed $\lambda$-calculus are generated with a family of operators (for all $\type{\sigma}, \type{\tau}$)~---
  $\abs^{\type{\sigma},\type{\tau}}
      : \type{\sigma}.\type{\tau} \to (\type{\sigma \Rightarrow \tau})$
  and
  $\app^{\type{\sigma},\type{\tau}}
      : (\type{\sigma \Rightarrow \tau}, \type{\sigma}) \to \type{\tau}$.
  And equational presentation for simply-typed $\lambda$-calculus is given by a family of axioms:
  \begin{align*}
    \metavar{m}{} : [\type{\sigma}]\type{\tau}, \metavar{n}{} : []\type{\sigma}
      \mid \cdot
      &\vdash \con{app}(\con{abs}(x. \metavar{m}{}[x]), \metavar{n}{}[])
      \equiv \metavar{m}{}[\metavar{n}{}[]] : \type{\tau}
      \tag{$\beta$} \\
    \metavar{m}{} : []\type{\sigma \Rightarrow \tau}
      \mid \cdot
      &\vdash \con{abs}(x. \con{app}(\metavar{m}{}[], x))
      \equiv \metavar{m}{}[] : \type{\sigma \Rightarrow \tau}
      \tag{$\eta$}
  \end{align*}
\end{example}

Note that the types here do not depend on the context, so it makes sense to only allow equating terms of the same type. This is in contrast to dependently typed systems, where terms can have different (but equivalent) types. 

An equational presentation $E$ generates a \defemph{second-order equational logic} \cite[Fig.~2]{FioreHur2010}. Rules for second-order equational logic are given in \cref{figure:second-order-equational-logic}.

\begin{figure}
  \begin{mdframed}
  \small
  \begin{prooftree}
    \AxiomC{$(\Theta \mid \Gamma \vdash s \equiv t : \type{\tau}) \in E$}
    \RightLabel{axiom}
    \UnaryInfC{$\Theta \mid \Gamma \vdash s \equiv_E t : \type{\tau}$}
    \DisplayProof\quad\quad\quad
    \AxiomC{$\Theta \mid \Gamma \vdash t : \type{\tau}$}
    \RightLabel{refl}
    \UnaryInfC{$\Theta \mid \Gamma \vdash t \equiv_E t : \type{\tau}$}
  \end{prooftree}
  \begin{prooftree}
    \AxiomC{$\Theta \mid \Gamma \vdash s \equiv_E t : \type{\tau}$}
    \RightLabel{sym}
    \UnaryInfC{$\Theta \mid \Gamma \vdash t \equiv_E s : \type{\tau}$}
    \DisplayProof\quad\quad\quad
    \AxiomC{$\Theta \mid \Gamma \vdash s \equiv_E t : \type{\tau}$}
    \AxiomC{$\Theta \mid \Gamma \vdash t \equiv_E u : \type{\tau}$}
    \RightLabel{trans}
    \BinaryInfC{$\Theta \mid \Gamma \vdash s \equiv_E u : \type{\tau}$}
  \end{prooftree}
  \begin{prooftree}
    \AxiomC{$\metavar{m}{_1} : [\type{\overline{\sigma_1}}]\type{\tau_1}, \ldots, \metavar{m}{_n} : [\type{\overline{\sigma_n}}]\type{\tau_n} \mid \Gamma \vdash s \equiv_E t : \type{\tau}$}
    \noLine
    \UnaryInfC{for all $i \in \{1, \ldots, n\} \quad \Theta \mid \Delta, \overline{x_i} : \type{\overline{\sigma_i}} \vdash s_i \equiv_E t_i : \type{\tau_i}$}
    \RightLabel{subst}
    \UnaryInfC{$\Theta \mid \Gamma, \Delta \vdash s[\metavar{m}{_1}[\overline{x_1}] \mapsto s_1, \ldots, \metavar{m}{_n}[\overline{x_n}] \mapsto s_n] \equiv_E t[\metavar{m}{_1}[\overline{x_1}] \mapsto t_1, \ldots, \metavar{m}{_n}[\overline{x_n}] \mapsto t_n] : \type{\tau}$}
  \end{prooftree}
  \end{mdframed}
  \caption{Rules of the second-order equational logic.}
  \label{figure:second-order-equational-logic}
\end{figure}

In their paper, Fiore and Hur note that metavariables with zero parameters are equivalent to regular variables. Indeed, we can \defemph{parametrise} every term $\Theta \mid \Gamma \vdash t : \type{\tau}$ to yield a term $\Theta, \widehat{\Gamma} \mid \cdot \vdash \widehat{t} : \type{\tau}$ where for $\Gamma = (x_1 : \type{\sigma_1}, \ldots, x_n : \type{\sigma_n})$ we have
\begin{align*}
  \widehat{\Gamma} = (\metavar{x}{_1} : []\type{\sigma_1}, \ldots, \metavar{x}{_n} : []\type{\sigma_n}) \quad\quad\quad
  \widehat{t} = t[x_1 \mapsto \metavar{x}{_1}[], \ldots, x_n \mapsto \metavar{x}{_n}[]]
\end{align*}

Applying parametrisation to an equational presentation $E$ yields a set of parametrised equations $\widehat{E}$. Note that the following are equivalent:
\begin{align*}
  \Theta \mid \Gamma \vdash s \equiv_E t : \type{\tau}
  \quad\quad\quad\text{iff}\quad\quad\quad
  \Theta, \widehat{\Gamma} \mid \cdot \vdash \widehat{s} \equiv_{\widehat{E}} \widehat{t} : \type{\tau}
\end{align*}

Thus, from now on, we assume that axioms have empty variable context.

\subsection{Second-Order Term Rewriting}

Finally, for the proof of completeness in \cref{section:completeness}, it will be helpful to rely on chains of term rewrites rather than derivation trees of equality modulo $E$.
Fiore and Hur introduce the \emph{second-order term rewriting} relation~\cite[Section 8]{FioreHur2010}.

An equational presentation $E$ generates a \defemph{second-order term rewriting relation} $\longrightarrow_E$ \cite[Fig.~4]{FioreHur2010}. We write $s \stackrel{*}{\longrightarrow}_E t$ if there is a sequence of terms $u_1, \ldots, u_n$ such that $s = u_1 \longrightarrow_E \ldots \longrightarrow_E u_n = t$. We write $s \longleftrightarrow_E t$ if either $s \longrightarrow_E t$ or $t \longrightarrow_E s$. We write $s \stackrel{*}{\longleftrightarrow}_E t$ if there is a sequence of terms $u_1, \ldots, u_n$ such that $s = u_1 \longleftrightarrow_E \ldots \longleftrightarrow_E u_n = t$.

Since we only care about substitutions of metavariables in axioms (variable context is empty), a simplified version of the rules is given in \cref{figure:second-order-term-rewriting}.

\begin{figure}
  \begin{mdframed}
  \begin{prooftree}
    \AxiomC{$(\metavar{m}{_1} : [\type{\overline{\sigma_1}}]\type{\tau_1}, \ldots, \metavar{m}{_k} : [\overline{\sigma_k}]\type{\tau_k} \mid \cdot \vdash l \equiv r : \type{\tau}) \in E$}
    \noLine
    \UnaryInfC{$\Theta \mid \Gamma, \overline{x_i} : \type{\overline{\sigma_i}} \vdash t_i : \type{\tau_i} : $\quad for $i \in \{1, \ldots, k\}$}
    \UnaryInfC{$\Theta \mid \Gamma \vdash l[\metavar{m}{_i}[\overline{z_i}] \mapsto t_i]^{i \in \{1, \ldots, k\}} \longrightarrow r[\metavar{m}{_i}[\overline{z_i}] \mapsto t_i]^{i \in \{1, \ldots, k\}} : \type{\tau}$}
  \end{prooftree}
  \begin{prooftree}
    \AxiomC{$\metavar{m}{} : [\type{\overline{\sigma}}]\type{\tau}
      \quad\quad \Theta \mid \Gamma \vdash s_i \longrightarrow t_i : \type{\sigma_i}$}
    \UnaryInfC{$\Theta \mid \Gamma \vdash \metavar{m}{}[\ldots, s_i, \ldots] \longrightarrow \metavar{m}{}[\ldots, t_i, \ldots] : \type{\tau}$}
  \end{prooftree}
  \begin{prooftree}
    \AxiomC{$\con{F} : (\type{\overline{\sigma_1}}.\type{\tau_1}, \ldots, \type{\overline{\sigma_n}}.\type{\tau_n}) \to \type{\tau}
    \quad\quad \Theta \mid \Gamma, \overline{x_i} : \type{\overline{\sigma_i}} \vdash s_i \longrightarrow t_i : \type{\tau_i}$}
    \UnaryInfC{$\Theta \mid \Gamma \vdash \con{F}(\ldots, \overline{x_i}.s_i, \ldots) \longrightarrow \con{F}(\ldots, \overline{x_i}.t_i, \ldots) : \type{\tau}$}
  \end{prooftree}
  \end{mdframed}
  \caption{Rules of the second-order term rewriting (simplified). In the second and third rules, the subterms under ($\ldots$) are kept unchanged, so only one subterm is rewritten per rule.}
  \label{figure:second-order-term-rewriting}
\end{figure}

An important result of Fiore and Hur is that of soundness and completeness of second-order term rewriting~\cite[Section~8]{FioreHur2010}:
  $
    \Theta \mid \Gamma \vdash s \equiv_E t : \type{\tau}
    \text{\; iff \;}
    \Theta \mid \Gamma \vdash s \stackrel{*}{\longleftrightarrow}_E t : \type{\tau}
    $.

\subsection{Example: $\lambda\mu$-calculus}

Parigot's $\lambda\mu$-calculus~\cite{Parigot1992} features $\mu$-abstraction and the following reduction rule:
\begin{align}
  (\mu \beta. u)\;v
    \longrightarrow \mu \beta. u [[\beta]w \mapsto [\beta](w\;v)]
    \tag{structural reduction}
\end{align}

This particular rule is interesting because of a non-trivial binder and because the substitution on the right hand side demands replacing every subterm of $u$ of the form $[\beta]w$ with subterm $[\beta](w\;v)$. Such substitution can be internalized as an explicit substitution operator, or modelled using $\lambda$-abstraction and traditional substitution (see \cref{example:lambda-mu-signature}).

Below we give a presentation of untyped $\lambda\mu$-calculus in second-order abstract syntax:

\begin{example}
  \label{example:lambda-mu-signature}
  For the untyped $\lambda\mu$-calculus and with explicit substitution operator for names, a signature consists of:
  \begin{enumerate}
    \item a set of types $T = \{\type{\star}, \type{\mathsf{Name}}\}$
    \item a set of four operators:
      \begin{align*}
        \abs &: \type{\star}.\type{\star} \to \type{\star}
          \tag{$\lambda$-abstraction: $\lambda x. t$} \\
        \app &: (\type{\star}, \type{\star}) \to \type{\star}
          \tag{application: $t_1\;t_2$} \\
        \con{mu} &: \type{\mathsf{Name}}.\type{\star} \to \type{\star}
          \tag{$\mu$-abstraction: $\mu \alpha. t$} \\
        \named &: (\type{\mathsf{Name}}, \type{\star}) \to \type{\star}
          \tag{named term: $[\alpha]t$} \\
        \subst &: (\type{\mathsf{Name}}.\type{\star}, \type{\star}) \to \type{\star}
        \tag{name substitution: $[[\alpha]w \mapsto [\alpha](w\; v)]t$}
      \end{align*}
  \end{enumerate}

  The following equalities correspond to the structural reduction and to the definition of name substitution:
  \begin{align}
    \metavar{m}{_1} : [\type{\mathsf{Name}}]\type{\star}, \metavar{m}{_2} : \type{\star} \mid \cdot
      &\vdash \app(\con{mu}(\alpha.\metavar{m}{_1}[\alpha]), \metavar{m}{_2}[]) 
      &\equiv\;& \con{mu}(\alpha.\subst(\alpha.\metavar{m}{_1}[\alpha], \metavar{m}{_2}[])) \notag \\
    \m : \type{\star} \mid x : \type{\star}
      &\vdash \subst(\alpha.x, \m[])
      &\equiv\;& x \notag \\
    \metavar{m}{_1} : [\type{\star}]\type{\star}, \metavar{m}{_2} : \type{\star} \mid \cdot
      &\vdash \subst(\alpha.\abs(x.\metavar{m}{_1}[x]), \metavar{m}{_2}[]) 
      &\equiv\;& \abs(x.\subst(\alpha.\metavar{m}{_1}[x], \metavar{m}{_2}[])) \notag \\
    \metavar{m}{_1}, \metavar{m}{_2}, \metavar{m}{_3} : \type{\star} \mid \cdot
      &\vdash \subst(\alpha.\app(\metavar{m}{_1}[], \metavar{m}{_2}[]), \metavar{m}{_3}[]) 
      &\equiv\;& \app(\subst(\alpha.\metavar{m}{_1}[], \metavar{m}{_3}[]),\subst(\alpha.\metavar{m}{_2}[], \metavar{m}{_3}[])) \notag \\
    \metavar{m}{_1}, \metavar{m}{_2} : \type{\star} \mid \cdot
      &\vdash \subst(\alpha.\named(\alpha, \metavar{m}{_1}[]), \metavar{m}{_2}[]) 
      &\equiv\;& \named(\alpha, \app(\subst(\alpha.\metavar{m}{_1}[], \metavar{m}{_2}[]), \metavar{m}{_2}[])) \notag \\
      \metavar{m}{_1}, \metavar{m}{_2} : \type{\star} \mid \beta : \type{\mathsf{Name}}
      &\vdash \subst(\alpha.\named(\beta, \metavar{m}{_1}[]), \metavar{m}{_2}[]) 
      &\equiv\;& \named(\beta, \subst(\alpha.\metavar{m}{_1}[], \metavar{m}{_2}[])) \notag \\
    \metavar{m}{_1}, \metavar{m}{_2} : \type{\star} \mid \cdot
      &\vdash \subst(\alpha.\con{mu}(\beta.\metavar{m}{_1}[\beta]), \metavar{m}{_2}[]) 
      &\equiv\;& \con{mu}(\beta.\subst(\alpha.\metavar{m}{_1}[\beta], \metavar{m}{_2}[])) \notag
  \end{align}
\end{example}

\section{E-unification with Second-Order Equations}
\label{section:e2-unification}

In this section, we formulate the equational unification problem for second-order abstract syntax, describe what constitutes a solution to such a problem and whether it is complete. We also recognise a subclass of problems in solved form, i.e. problems that have an immediate solution. For the most part, this is a straightforward generalisation of standard concepts of $E$-unification \cite{GallierSnyder1989}.

\begin{definition}
  A \defemph{second-order constraint} $\Theta \mid \Gamma_\exists, \Gamma_\forall \vdash s \meq t : \type{\tau}$ is a pair of terms in a context, where variable context is split into two components: $\Gamma = (\Gamma_\exists, \Gamma_\forall)$.
\end{definition}

The idea is that $\Gamma_\exists$ contains variables that we need to solve for (free variables), while $\Gamma_\forall$ contains variables that we cannot substitute (bound variables). Metavariables are always treated existentially, so we do not split metavariable context.
Similarly to equational representations, we can parametrise (a set of) constraints, yielding $\Theta, \widehat{\Gamma}_\exists \mid \Gamma_\forall \vdash s \meq t : \type{\tau}$.
Thus, from now on, we will assume $\Gamma = \Gamma_\forall$ (i.e. $\Gamma_\exists$ is empty) for all constraints.

\begin{example}
  Assume $\type{\alpha} = \type{\sigma \Rightarrow \tau}, \type{\beta} = \type{(\sigma \Rightarrow \tau) \Rightarrow \tau}$.
  The following are equivalent:
  \begin{enumerate}
    \item \label{constraint-lambda} For all $g : \type{\alpha}, a : \type{\sigma}$, find $m : \type{\alpha \Rightarrow \beta \Rightarrow \tau}$ such that $m \; g \; (\lambda z. z \; a) = g \; a$.
    \item \label{constraint-simple}
      $\cdot \mid m : \type{\alpha \Rightarrow \beta \Rightarrow \tau}, g : \type{\alpha}, a : \type{\sigma}
      \vdash \con{app}(\con{app}(m, g), \con{abs}(z. \con{app}(z, a)))
      \meq \con{app}(g, a) : \type{\tau}$
    \item \label{constraint-param}
      $\metavar{m}{} : []\type{\alpha \Rightarrow \beta \Rightarrow \tau} \mid g : \type{\alpha}, a : \type{\sigma}
      \vdash \con{app}(\con{app}(\m[], g), \con{abs}(z. \con{app}(z, a)))
      \meq \con{app}(g, a) : \type{\tau}$
    \item \label{constraint-param-equiv}
      $\metavar{m}{} : [\type{\alpha}, \type{\beta}]\type{\tau} \mid g : \type{\alpha}, a : \type{\sigma}
      \vdash \metavar{m}{}[g, \con{abs}(z. \con{app}(z, a))]
      \meq \con{app}(g, a) : \type{\tau}$
  \end{enumerate}

  Here, \cref{constraint-simple} is a direct encoding of \cref{constraint-lambda} as a second-order constraint.
  \cref{constraint-param} is a parametrised version of \cref{constraint-simple}.
  \cref{constraint-param-equiv} is equivalent to \cref{constraint-param} modulo $\beta$-equality,
  witnessed by metasubstitutions $[\m[] \mapsto \con{abs}(x.\con{abs}(y.\con{app}(x, y)))]$ and $[\m[x, y] \mapsto \con{app}(\con{app}(\m[], x), y)]$.

\end{example}

\begin{definition}
  Given an equational presentation $E$, an \defemph{$E$-unification problem} $\langle \Theta, S \rangle$ is a finite set $S$ of second-order constraints in a shared metavariable context $\Theta$. We present an $E$-unification problem as a formula of the following form:
  \[
    \exists (\metavar{m}{_1} : [\type{\overline{\sigma_1}}]\type{\tau_1}, \ldots, \metavar{m}{_n} : [\type{\overline{\sigma_n}}]\type{\tau_n}).
    (\forall (\overline{z_1} : \type{\overline{\rho_1}}). s_1 \meq t_1 : \type{\tau_1})
    \land \ldots \land
    (\forall (\overline{z_k} : \type{\overline{\rho_k}}). s_k \meq t_k : \type{\tau_k})
  \]
\end{definition}

Here is an example of $E$-unification problem for simply-typed $\lambda$-terms:
 \[
   \exists (\metavar{m}{} : [\type{\sigma \Rightarrow \tau}, \type{(\sigma \Rightarrow \tau) \Rightarrow \tau}]\type{\tau}).
   \;\; \forall (g : \type{\sigma \Rightarrow \tau}, y : \type{\sigma}). 
   \;\; \metavar{m}{}[g, \con{abs}(x. \con{app}(x, y))]
     \meq \con{app}(g, y) : \type{\tau}
 \]

\begin{definition}
  A metavariable substitution $\metasubst{\xi} : \Theta \to \Xi$ is called an \defemph{$E$-unifier} for an $E$-unification problem $\langle \Theta, S \rangle$ if for all constraints $(\Theta \mid \Gamma_\forall \vdash s \meq t : \type{\tau}) \in S$ we have
  \[
    \Xi \mid \Gamma_\forall \vdash \metasubst{\xi} s \equiv_E \metasubst{\xi} t : \type{\tau}
  \]
  We write $U_E(S)$ for the set of all $E$-unifiers for $\langle \Theta, S \rangle$.
\end{definition}

\begin{example}
  Consider unification problem $\langle \Theta, S \rangle$ for the simply-typed $\lambda$-calculus:
  \begin{align*}
    \Theta &= \metavar{m}{} : [\type{\sigma \Rightarrow \tau}, \type{(\sigma \Rightarrow \tau) \Rightarrow \tau}]\type{\tau} \\
    S &= \{ \Theta \mid g : \type{\sigma \Rightarrow \tau}, y : \type{\sigma} \vdash \metavar{m}{}[g, \con{abs}(x. \con{app}(x, y))] \meq \con{app}(g, y) : \type{\tau} \}
  \end{align*}
  Substitution $[\metavar{m}{}[z_1, z_2] \mapsto \con{app}(z_2, z_1)] : \Theta \to \cdot$
  is an $E$-unifier for $\langle \Theta, S \rangle$.
\end{example}

\subsection{Unification Problems in Solved Form}

Here, we recognise a class of trivial unification problems. The idea is that a constraint that looks like a metavariable substitution can be uniquely unified. A unification problem can be unified as long as substitutions for constraints are sufficiently disjoint. More precisely:

\begin{definition}
  An $E$-unification problem $\langle \Theta, S \rangle$ is in \defemph{solved form} when $S$ consists only of constraints of the form
  $ \Theta, \metavar{m}{} : [\type{\overline{\sigma}}]\type{\tau} \mid \Gamma_\forall \vdash \metavar{m}{}[\overline{z}] \meq t : \type{\tau} $
  such that
  \begin{enumerate}
    \item $\overline{z} : \type{\overline{\sigma}} \subseteq \Gamma_\forall$\quad
      (parameters of $\metavar{m}{}$ are \emph{distinct} variables from $\Gamma_\forall$)
    \item $\Theta \mid \overline{z} : \type{\overline{\sigma}} \vdash t : \type{\tau}$\quad
      ($\metavar{m}{}$ and variables not occurring in $\overline{z}$ do not occur in $t$)
    \item all constraints have distinct metavariables on the left hand side
  \end{enumerate}
\end{definition}

\begin{example}
  Let $\Theta = (\metavar{m}{} : [\type{\sigma}, \type{\sigma}]\type{\sigma})$. Then
  \begin{enumerate}
    \item $\{ \Theta \mid x : \type{\sigma}, y : \type{\sigma} \vdash
      \metavar{m}{}[y, x] \meq \con{app}(\con{abs}(z. x), y) : \type{\sigma} \}$
      is in solved form;
    \item $\{ \Theta \mid x : \type{\sigma}, y : \type{\sigma} \vdash
      \metavar{m}{}[x, x] \meq \con{app}(\con{abs}(z. x), y) : \type{\sigma} \}$
      is not in solved form, since parameters of $\m$ are not \emph{distinct} variables and also since variable $y$ occurs on the right hand side, but does not occur in parameters of $\m$;
    \item $\{ \Theta \mid f : \type{\sigma \Rightarrow \sigma}, y : \type{\sigma} \vdash
      \metavar{m}{}[y, \con{app}(f, y)] \meq \con{app}(f, y) : \type{\sigma} \}$
      is not in solved form, since second parameter of $\m$ is not a variable;
  \end{enumerate}
\end{example}

\begin{proposition}
  \label{proposition:solved-form-unifier}
  An $E$-unification problem $\langle \Theta, S \rangle$ in solved form has an $E$-unifier.
\end{proposition}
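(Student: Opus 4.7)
The plan is to read an $E$-unifier directly off the solved-form constraints. Enumerating $S$ as $\metavar{m}{_i}[\overline{z_i}] \meq t_i : \type{\tau_i}$ for $i \in \{1, \ldots, n\}$, the candidate is the simultaneous metavariable substitution
\[
  \metasubst{\xi} = [\metavar{m}{_i}[\overline{z_i}] \mapsto t_i]^{i \in \{1, \ldots, n\}}.
\]
Condition~3 makes $\metasubst{\xi}$ well-defined because the $\metavar{m}{_i}$ are pairwise distinct; condition~1 matches the required shape of each left-hand-side entry (a tuple of distinct variables from $\Gamma_\forall$); condition~2 supplies the typing $\Theta \mid \overline{z_i} : \type{\overline{\sigma_i}} \vdash t_i : \type{\tau_i}$ needed by each entry and, on top of that, forbids $\metavar{m}{_i}$ from appearing in $t_i$.

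Second, I would verify $\metasubst{\xi} \in U_E(S)$ constraint-by-constraint. Unfolding the recursive definition of metavariable substitution, the left-hand side of the $i$-th constraint becomes $\metasubst{\xi}(\metavar{m}{_i}[\overline{z_i}]) = t_i[\overline{z_i} \mapsto \overline{\metasubst{\xi}(z_i)}] = t_i$, since metavariable substitution is the identity on ordinary variables. On the right-hand side, condition~2 prevents the clause of $\metasubst{\xi}$ that would unfold $\metavar{m}{_i}$ from ever firing inside $t_i$, and $\metasubst{\xi}$ acts as the identity on every metavariable not listed in its domain; in the clean case this yields $\metasubst{\xi}(t_i) = t_i$, so the constraint is discharged by the \emph{refl} rule of \cref{figure:second-order-equational-logic} with no appeal to any axiom of $E$.

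The step I expect to be the main obstacle is handling possible cross-dependencies, since condition~2 only bars $\metavar{m}{_i}$ itself from occurring in $t_i$: another solved $\metavar{m}{_j}$ with $j \neq i$ might still occur there and would be unfolded by $\metasubst{\xi}$, breaking the syntactic equality above. I would resolve this either by reading condition~2 in the stronger sense that no solved metavariable appears in any $t_i$ (so that the calculation goes through verbatim), or by walking the dependency graph induced by the $\metavar{m}{_i}$: topologically sort the acyclic part and apply the per-entry substitutions incrementally, breaking any cycles by mapping the offending metavariables to a common fresh metavariable in the codomain. In every presentation only \emph{refl} (and at worst \emph{subst}) from \cref{figure:second-order-equational-logic} is invoked, so an $E$-unifier exists regardless of the chosen $E$.
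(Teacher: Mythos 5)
Your construction is exactly the paper's: the paper defines $\metasubst{\xi_S} = [\metavar{m}{_i}[\overline{z_i}] \mapsto t_i]^{i \in \{1, \ldots, n\}}$, checks well-formedness from the three solved-form conditions, and discharges each substituted constraint by reflexivity. The cross-dependency worry you raise at the end is one the paper silently elides: its one-line verification that the substituted constraints become $t_i \equiv_E t_i$ only goes through under your first reading, namely that no solved metavariable occurs in any right-hand side (the left-hand side of the $i$-th constraint rewrites to $t_i$ on the nose, while the right-hand side rewrites to $\metasubst{\xi_S} t_i$, and these coincide only if no $\metavar{m}{_j}$ occurs in $t_i$). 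That stronger reading is the intended one and is standard for ``solved form''; your fallback of topologically sorting and breaking cycles with fresh metavariables should not be offered as a genuine alternative, since under the weak reading a cyclic system such as $\metavar{m}{_1}[] \meq \con{F}(\metavar{m}{_2}[])$, $\metavar{m}{_2}[] \meq \con{F}(\metavar{m}{_1}[])$ would count as solved yet have no unifier in an empty theory, so the proposition itself would fail rather than need a cleverer witness. With the strong reading fixed, your argument matches the paper's.
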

\begin{proof}
  Assume $\Theta = \{ \Theta \mid \Gamma_i \vdash \metavar{m}{_i}[\overline{z_i}] \meq t_i : \type{\tau_i} \}^{i \in \{1, \ldots, n\}}$. Let
  $ \metasubst{\xi_S} = [\metavar{m}{_i}[\overline{z_i}] \mapsto t_i]^{i \in \{1, \ldots, n\}} $.
  Note that $\metasubst{\xi_S}$ is a well formed metasubstitution since, by assumption, each $\overline{z_i}$ is a sequence of distinct variables, $t_i$ does not reference other variables or $\metavar{m}{_i}$, and each metavariable $\metavar{m}{_i}$ is mapped only once in $\metasubst{\xi_S}$. Applying $\metasubst{\xi_S}$ to each constraint we get trivial constraints, which are satisfied by reflexivity: $\Theta \mid \Gamma_i \vdash t_i \equiv_E t_i : \type{\tau_i}$. Thus, $\metasubst{\xi_S}$ is an $E$-unifier for $\langle \Theta, S \rangle$.
\end{proof}

Later, we will refer to the $E$-unifier constructed in the proof of \cref{proposition:solved-form-unifier} as $\metasubst{\xi_S}$.

\subsection{Comparing E-unifiers}

In general, a unification problem may have multiple unifiers. Here, we generalise the usual notion of comparing $E$-unifiers~\cite{GallierSnyder1989} to the second-order abstract syntax using the \emph{subsumption} order, leading to a straightforward generalisation of the ideas of \emph{the most general unifier} and \emph{a complete set of unifiers}. We do not consider generalising \emph{essential unifiers}~\cite{HocheSzabo2006,SzaboSiekmannHoche2016} or \emph{homeomorphic embedding}~\cite{SzaboSiekmann2021}, although these might constitute a prospective future work.

\begin{definition}
  Two metavariable substitutions $\metasubst{\theta}, \metasubst{\xi} : \Theta \to \Xi$ are said to be \defemph{equal modulo $E$} (notated $\metasubst{\theta} \equiv_E \metasubst{\xi}$) if for all metavariables $\metavar{m}{} : [\type{\overline{\sigma}}]\type{\tau} \in \Theta$, any context $\Gamma$, and any terms $\Theta \mid \Gamma \vdash t_i : \type{\sigma_i}$ (for all $i \in \{ 1, \ldots, n \}$) we have
  \[
    \Xi \mid \Gamma \vdash \metasubst{\theta} \metavar{m}{}[t_1, \ldots, t_n] \equiv_E \metasubst{\xi} \metavar{m}{}[t_1, \ldots, t_n] : \type{\tau}
  \]

  We say that $\metasubst{\theta}$ is \defemph{more general modulo $E$ than} $\metasubst{\xi}$ (notated $\metasubst{\theta} \preccurlyeq_E \metasubst{\xi}$) when there exists a substitution $\metasubst{\eta} : \Xi \to \Xi$ such that $\metasubst{\eta \circ \theta} \equiv_E \metasubst{\xi}$.
\end{definition}

  Empty substitution is more general than any substitution.
  A more interesting example may be found in $\lambda$-calculus. Let
  \begin{align*}
    \metasubst{\theta_1} &= [\metavar{m}{}[x, y] \mapsto \con{app}(\metavar{n}{}[x], y)] \\
    \metasubst{\theta_2} &= [ \metavar{m}{}[x, y] \mapsto \con{app}(\con{abs}(z. x), y)
    , \metavar{n}{}[x] \mapsto \con{abs}(z. x) ] \\
    \metasubst{\theta_3} &= [ \metavar{m}{}[x, y] \mapsto x
    , \metavar{n}{}[x] \mapsto \con{abs}(z. x) ]
  \end{align*}
  Then, $\metasubst{\theta_1} \preccurlyeq_E \metasubst{\theta_2}$, $\metasubst{\theta_2} \equiv_E \metasubst{\theta_3}$, and $\metasubst{\theta_1} \preccurlyeq_E \metasubst{\theta_3}$ (witnessed by $[\metavar{n}{}[x] \mapsto \con{abs}(z. x)] \metasubst{\; \circ \; \theta_1} \equiv_E \metasubst{\theta_3}$).

\begin{proposition}
  If $\metasubst{\theta} \equiv_E \metasubst{\xi}$ then for any $E$-unification problem $\langle \Theta, S \rangle$ we have $\metasubst{\theta} \in U_E(S)$ iff $\metasubst{\xi} \in U_E(S)$.
\end{proposition}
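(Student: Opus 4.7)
My plan is to reduce the claim to a structural congruence lemma: if $\metasubst{\theta} \equiv_E \metasubst{\xi}$, then for every term $\Theta \mid \Gamma \vdash s : \type{\tau}$ one has $\Xi \mid \Gamma \vdash \metasubst{\theta} s \equiv_E \metasubst{\xi} s : \type{\tau}$. The definition of $\equiv_E$ supplies this directly only for terms of the shape $\metavar{m}{}[\overline{t}]$; the lemma extends it to arbitrary terms. Given the lemma, for each constraint $(\Theta \mid \Gamma_\forall \vdash s \meq t : \type{\tau}) \in S$ and any $\metasubst{\theta} \in U_E(S)$, we chain $\metasubst{\xi} s \equiv_E \metasubst{\theta} s \equiv_E \metasubst{\theta} t \equiv_E \metasubst{\xi} t$, applying the lemma at the outer steps, the unifier hypothesis in the middle, and \emph{sym}/\emph{trans} from \cref{figure:second-order-equational-logic} to combine them; hence $\metasubst{\xi} \in U_E(S)$. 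The converse follows by swapping $\metasubst{\theta}$ and $\metasubst{\xi}$, using that $\equiv_E$ on substitutions is visibly symmetric from its definition.

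I would prove the congruence lemma by induction on $s$, keeping $\Gamma$ universally quantified so that contexts extended under binders are covered uniformly. The variable case is immediate, since both metasubstitutions fix variables and \emph{refl} applies. The metavariable case $s = \metavar{m}{}[t_1, \ldots, t_n]$ is essentially the hypothesis $\metasubst{\theta} \equiv_E \metasubst{\xi}$ itself, instantiated at the particular $t_i$ and $\Gamma$, so no inductive step is required for this case. The operator case $s = \defcon(\overline{x_1}.s_1, \ldots, \overline{x_n}.s_n)$ is where the IH is used: each IH instance in the extended context $\Gamma, \overline{x_i} : \type{\overline{\sigma_i}}$ yields $\metasubst{\theta} s_i \equiv_E \metasubst{\xi} s_i$, and one then concludes by operator-congruence.

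The only mildly delicate point is the operator-congruence step. I would obtain it as an instance of the \emph{subst} rule from \cref{figure:second-order-equational-logic}: start from the reflexivity equation $\metavar{n}{_1} : [\type{\overline{\sigma_1}}]\type{\tau_1}, \ldots, \metavar{n}{_n} : [\type{\overline{\sigma_n}}]\type{\tau_n} \mid \cdot \vdash \defcon(\overline{x_1}.\metavar{n}{_1}[\overline{x_1}], \ldots) \equiv_E \defcon(\overline{x_1}.\metavar{n}{_1}[\overline{x_1}], \ldots)$ in a fresh metavariable context, and apply \emph{subst} sending each $\metavar{n}{_i}[\overline{x_i}]$ to $\metasubst{\theta} s_i$ on the left and to $\metasubst{\xi} s_i$ on the right. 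Equivalently, one can appeal to Fiore--Hur's soundness and completeness of second-order term rewriting, where operator-congruence is a primitive rule (third rule of \cref{figure:second-order-term-rewriting}) and $\equiv_E$ coincides with $\stackrel{*}{\longleftrightarrow}_E$. I expect this bookkeeping about operator congruence, rather than anything about the unifier relation itself, to be the only non-mechanical part of the argument.
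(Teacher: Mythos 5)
Your proposal is correct and follows essentially the same route as the paper: the paper's proof is a one-line appeal to induction on the structure of the terms, and the natural way to carry out that induction is exactly your congruence lemma $\metasubst{\theta} s \equiv_E \metasubst{\xi} s$ followed by the \emph{sym}/\emph{trans} chain. You merely make explicit the details the paper leaves implicit, including the correct observation that the metavariable case is the hypothesis itself and that operator congruence is an instance of \emph{subst} applied to a reflexivity equation.
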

\begin{proof}
  For each constraint $\Theta \mid \Gamma_\forall \vdash s \meq t : \type{\tau}$, by induction on the structure of $s$ and $t$ it is straightforward to show that
  $\Xi \mid \Gamma \vdash \metasubst{\theta} s \equiv_E \metasubst{\theta} t : \type{\tau}$
    iff
    $\Xi \mid \Gamma \vdash \metasubst{\xi} s \equiv_E \metasubst{\xi} t : \type{\tau}$.
\end{proof}
\begin{corollary}
  If $\metasubst{\theta} \preccurlyeq_E \metasubst{\xi}$ and $\metasubst{\theta} \in U_E(S)$ then $\metasubst{\xi} \in U_E(S)$.
\end{corollary}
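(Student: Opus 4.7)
The plan is to reduce the corollary to the preceding proposition by unfolding the definition of $\preccurlyeq_E$ and then invoking congruence of second-order equational logic. First I would unpack the hypothesis $\metasubst{\theta} \preccurlyeq_E \metasubst{\xi}$: it yields some $\metasubst{\eta} : \Xi \to \Xi$ with $\metasubst{\eta \circ \theta} \equiv_E \metasubst{\xi}$. So it suffices to show that $\metasubst{\eta \circ \theta}$ is itself an $E$-unifier of $\langle \Theta, S \rangle$, because then the preceding proposition (applied to $\metasubst{\eta \circ \theta}$ and $\metasubst{\xi}$) immediately transfers the $E$-unifier property to $\metasubst{\xi}$.

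Next I would show $\metasubst{\eta \circ \theta} \in U_E(S)$. Pick any constraint $(\Theta \mid \Gamma_\forall \vdash s \meq t : \type{\tau}) \in S$. By assumption $\metasubst{\theta} \in U_E(S)$, so we have a derivation of $\Xi \mid \Gamma_\forall \vdash \metasubst{\theta} s \equiv_E \metasubst{\theta} t : \type{\tau}$. Applying the metavariable substitution $\metasubst{\eta}$ on both sides using the \textbf{subst} rule of \cref{figure:second-order-equational-logic} (with $\metasubst{\eta}$ instantiating every metavariable in $\Xi$ by itself, except those actually rewritten by $\metasubst{\eta}$) yields $\Xi \mid \Gamma_\forall \vdash (\metasubst{\eta \circ \theta}) s \equiv_E (\metasubst{\eta \circ \theta}) t : \type{\tau}$. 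This is exactly the statement that $\metasubst{\eta \circ \theta}$ unifies that constraint. Since the constraint was arbitrary, $\metasubst{\eta \circ \theta} \in U_E(S)$.

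Finally, combining $\metasubst{\eta \circ \theta} \equiv_E \metasubst{\xi}$ with $\metasubst{\eta \circ \theta} \in U_E(S)$ and the preceding proposition (equal-modulo-$E$ substitutions unify the same problems) gives $\metasubst{\xi} \in U_E(S)$, as desired.

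I do not expect any serious obstacle: all the work sits in justifying that \textbf{subst} can be used as a congruence rule that lifts $\equiv_E$ along a metasubstitution $\metasubst{\eta}$, and this is the intended content of that rule. The only bookkeeping subtlety is that $\metasubst{\eta}$ acts on the codomain $\Xi$ of $\metasubst{\theta}$, so the derivation obtained via \textbf{subst} is indeed in context $\Xi \mid \Gamma_\forall$ with the composed substitution applied, which matches the definition of $U_E(S)$.
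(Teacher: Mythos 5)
Your argument is correct and is exactly the intended derivation: the paper leaves this corollary unproved as an immediate consequence of the preceding proposition, and your unfolding of $\preccurlyeq_E$, the use of the \textbf{subst} rule (with reflexivity premises) to show $\metasubst{\eta \circ \theta} \in U_E(S)$, and the final appeal to the proposition fill in precisely the omitted steps. No gaps.
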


Not all substitutions can be compared.
Consider untyped lambda calculus with $\type{\star}$ being the type of any term.
Let $\Theta = (\metavar{m}{} : [\type{\star}, \type{\star}]\type{\star})$ and
\begin{align*}
  \metasubst{\theta} &= [ \metavar{m}{}[z_1, z_2] \mapsto \con{app}(z_2, z_1) ] \\
  \metasubst{\xi}    &= [ \metavar{m}{}[z_1, z_2] \mapsto \con{app}(z_1, \con{app}(z_2, \con{abs}(z.z)))]
\end{align*}
None of these substitutions is more general modulo equational theory $E$ of untyped $\lambda$-calculus than the other. At the same time, both are $E$-unifiers for the problem
\[ \exists (\metavar{m}{} : [\type{\star}, \type{\star}]\type{\star}). \;
     \forall (g : \type{\star}, y : \type{\star}). \;
       \metavar{m}{}[g, \con{abs}(x. \con{app}(x, y))] \meq \con{app}(g, y) : \type{\star}
   \]

\subsection{Complete Sets of E-unifiers}

While there is sometimes more than one solution to an $E$-unification problem, we may often hope to collect several sufficiently general unifiers into a single set:

\begin{definition}
  \label{definition:csu}
  Given an $E$-unification problem $\langle \Theta, S \rangle$, a (minimal)
  \defemph{complete set of $E$-unifiers} for $\langle \Theta, S \rangle$ (notated $\mathsf{CSU}_E(S)$) is a subset of $U_E(S)$ such that
  \begin{enumerate}
    \item (completeness) for any $\metasubst{\eta} \in U_E(S)$ there exists $\metasubst{\theta} \in \mathsf{CSU}_E(S)$ such that $\metasubst{\theta} \preccurlyeq_E \metasubst{\eta}$;
    \item (minimality) for any $\metasubst{\theta}, \metasubst{\xi} \in \mathsf{CSU}_E(S)$ if $\metasubst{\theta} \preccurlyeq_E \metasubst{\xi}$ then $\metasubst{\theta} = \metasubst{\xi}$.
  \end{enumerate}
\end{definition}

We reserve the notation $\mathsf{CSU}_E(S)$ to refer to minimal complete sets of $E$-unifiers (i.e. satisfying both conditions).

\begin{example}
  \label{example:csu-untyped}
  The $E$-unification problem $\langle \Theta, S \rangle$ in untyped $\lambda$-calculus has an infinite $\mathsf{CSU}_E(S)$:
  \begin{align*}
    \langle \Theta, S \rangle = \exists (\metavar{m}{} : [\type{\star}, \type{\star}]\type{\star}). \; 
    \forall (g : \type{\star}, y : \type{\star}). \;
    \metavar{m}{}[g, \con{abs}(x. \con{app}(x, y))] \meq \con{app}(g, y) : \type{\star}
    \span
    \\
    \mathsf{CSU}_E(S) = \{
    & [ \metavar{m}{}[z_1, z_2] \mapsto \con{app}(z_2, z_1) ] , \\
    & [ \metavar{m}{}[z_1, z_2] \mapsto \con{app}(z_1, \con{app}(z_2, \con{abs}(x.x))) ] , \\
    & [ \metavar{m}{}[z_1, z_2] \mapsto \con{app}(\con{app}(z_2, \con{abs}(x. \con{abs}(f. \con{app}(f, x)))), z_1) ]
    , \ldots \}
  \end{align*}
\end{example}

\begin{proposition}
  For any two minimal complete sets of $E$-unifiers $\mathsf{CSU}^1_E(S)$ and $\mathsf{CSU}^2_E(S)$, there exists a bijection $f : \mathsf{CSU}^1_E(S) \longleftrightarrow \mathsf{CSU}^2_E(S)$ such that
\[
  \forall \metasubst{\theta} \in \mathsf{CSU}^1_E(S).\quad \metasubst{\theta} \equiv_E f(\metasubst{\theta})
\]
\end{proposition}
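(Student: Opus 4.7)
The plan is to mirror the classical argument for uniqueness of minimal complete sets of unifiers, adapted to metasubstitutions in this second-order setting. First, I would build a candidate map $f : \mathsf{CSU}^1_E(S) \to \mathsf{CSU}^2_E(S)$ using only the completeness property: for each $\metasubst{\theta} \in \mathsf{CSU}^1_E(S) \subseteq U_E(S)$, apply completeness of $\mathsf{CSU}^2_E(S)$ to pick some $\metasubst{\theta}' \in \mathsf{CSU}^2_E(S)$ with $\metasubst{\theta}' \preccurlyeq_E \metasubst{\theta}$, and set $f(\metasubst{\theta}) := \metasubst{\theta}'$. Symmetrically define $g : \mathsf{CSU}^2_E(S) \to \mathsf{CSU}^1_E(S)$.

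The heart of the argument is a ``zig-zag'' that promotes the one-sided $\preccurlyeq_E$ produced by completeness into the desired equivalence $\metasubst{\theta} \equiv_E f(\metasubst{\theta})$. Given $\metasubst{\theta}' = f(\metasubst{\theta})$ with $\metasubst{\theta}' \preccurlyeq_E \metasubst{\theta}$, note that $\metasubst{\theta}' \in U_E(S)$, so completeness of $\mathsf{CSU}^1_E(S)$ yields $\metasubst{\theta}'' \in \mathsf{CSU}^1_E(S)$ with $\metasubst{\theta}'' \preccurlyeq_E \metasubst{\theta}'$. Transitivity of $\preccurlyeq_E$ (which follows by composing the two witnessing substitutions $\metasubst{\eta}$) gives $\metasubst{\theta}'' \preccurlyeq_E \metasubst{\theta}$, and now minimality of $\mathsf{CSU}^1_E(S)$ forces $\metasubst{\theta}'' = \metasubst{\theta}$. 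So $\metasubst{\theta} \preccurlyeq_E \metasubst{\theta}'$ as well, and I conclude $\metasubst{\theta} \equiv_E \metasubst{\theta}'$ from the two-sided subsumption.

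Once $f(\metasubst{\theta}) \equiv_E \metasubst{\theta}$ is available, the remaining bookkeeping is straightforward. Well-definedness of $f$: if $\metasubst{\theta}'_1, \metasubst{\theta}'_2 \in \mathsf{CSU}^2_E(S)$ were both valid choices, then $\metasubst{\theta}'_1 \equiv_E \metasubst{\theta} \equiv_E \metasubst{\theta}'_2$ gives $\metasubst{\theta}'_1 \preccurlyeq_E \metasubst{\theta}'_2$ (take $\metasubst{\eta}$ to be the identity), so minimality of $\mathsf{CSU}^2_E(S)$ collapses them. Bijection: for $g \circ f$, the zig-zag applied in both directions gives $g(f(\metasubst{\theta})) \equiv_E f(\metasubst{\theta}) \equiv_E \metasubst{\theta}$; since $g(f(\metasubst{\theta}))$ and $\metasubst{\theta}$ both lie in $\mathsf{CSU}^1_E(S)$ and are $\preccurlyeq_E$-comparable via $\equiv_E$, minimality gives $g(f(\metasubst{\theta})) = \metasubst{\theta}$. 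The argument for $f \circ g$ is symmetric.

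The main obstacle will be the antisymmetry step at the end of the second paragraph: deducing $\metasubst{\theta} \equiv_E \metasubst{\theta}'$ in the pointwise sense of \cref{definition:csu} from the two subsumption directions $\metasubst{\theta} \preccurlyeq_E \metasubst{\theta}'$ and $\metasubst{\theta}' \preccurlyeq_E \metasubst{\theta}$. The witnesses $\metasubst{\eta_1}, \metasubst{\eta_2} : \Xi \to \Xi$ only tell us that $\metasubst{\eta_2 \circ \eta_1}$ acts as identity modulo $E$ on the image of $\metasubst{\theta}$, which is weaker than pointwise $E$-equality of $\metasubst{\theta}$ and $\metasubst{\theta}'$ on all of $\Theta$; careful use of how metavariable substitution interacts with $\equiv_E$ (together with the standing convention on codomain metavariable contexts) is needed to bridge this gap. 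Everything else is a routine application of completeness and minimality.
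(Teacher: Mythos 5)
The paper states this proposition without proof, so there is no official argument to compare against. Your zig-zag construction is the standard Fages--Huet-style argument: completeness of $\mathsf{CSU}^2_E(S)$ to define $f$, completeness of $\mathsf{CSU}^1_E(S)$ plus transitivity of $\preccurlyeq_E$ plus minimality to force $\metasubst{\theta''} = \metasubst{\theta}$, and the same machinery for well-definedness and $g \circ f = \mathrm{id}$. Those parts are sound, granting the routine (and true) lemma that $\equiv_E$ is a congruence for composition of metasubstitutions.

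The gap is exactly the step you flag at the end, and it is not mere bookkeeping: it does not go through with the paper's definitions. Mutual subsumption $\metasubst{\theta} \preccurlyeq_E \metasubst{\theta'}$ and $\metasubst{\theta'} \preccurlyeq_E \metasubst{\theta}$ only yields witnesses $\metasubst{\eta_1}, \metasubst{\eta_2}$ with $\metasubst{\eta_2 \circ \eta_1 \circ \theta} \equiv_E \metasubst{\theta}$, which constrains $\metasubst{\eta_1}$ to be invertible modulo $E$ on the metavariables occurring in the image of $\metasubst{\theta}$; it does not force $\metasubst{\eta_1}$ to be the identity modulo $E$. Concretely, if $\metasubst{\eta_1}$ is a nontrivial permutation (renaming) of the codomain metavariables in $\Xi$ and $\metasubst{\eta_2}$ its inverse, then $\metasubst{\theta}$ and $\metasubst{\theta'} = \metasubst{\eta_1 \circ \theta}$ mutually subsume each other yet are not pointwise equal modulo $E$, since $E$ does not in general prove $\metavar{n}{}[] \equiv_E \metavar{p}{}[]$ for distinct metavariables of $\Xi$. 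Two minimal complete sets can differ by precisely such a renaming (if $\{\metasubst{\theta}\}$ is minimal complete, so is $\{\metasubst{\rho \circ \theta}\}$ for any permutation $\metasubst{\rho}$ of $\Xi$), so the stated conclusion $\metasubst{\theta} \equiv_E f(\metasubst{\theta})$ in the paper's pointwise sense of $\equiv_E$ cannot be obtained from antisymmetry alone; one has to either weaken the target to mutual subsumption or allow a change of the codomain metavariable context $\Xi$ by an invertible-modulo-$E$ substitution. Your proof establishes the bijection and mutual subsumption, but not the literal $\equiv_E$ claimed.
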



Thus, $\mathsf{CSU}_E(S)$ is unique up to a bijection modulo $E$,
so from now on we will refer to \emph{the} complete set of $E$-unifiers.

\begin{definition}
  When the complete set of $E$-unifiers $\mathsf{CSU}_E(S)$ is a singleton set, then we refer to its element as \defemph{the most general $E$-unifier} of $S$ (notated $\mathsf{mgu}_E(S)$).
\end{definition}

\begin{example}
  \label{example:mgu}
  Consider this $E$-unification problem $\langle \Theta, S \rangle$ in simply-typed $\lambda$-calculus:
  \[
    \exists (\metavar{m}{} : [\type{\sigma \Rightarrow \tau}, \type{(\sigma \Rightarrow \tau) \Rightarrow \tau}]\type{\tau}). \;
    \forall (g : \type{\sigma \Rightarrow \tau}, y : \type{\sigma}). \;
    \metavar{m}{}[g, \con{abs}(x. \con{app}(x, y))] \meq \con{app}(g, y) : \type{\tau}
  \]
  For this problem the most general $E$-unifier exists:
  $\mathsf{mgu}_E(S) = [ \metavar{m}{}[z_1, z_2] \mapsto \con{app}(z_2, z_1) ]$.
  This example differs from~\cref{example:csu-untyped} as here we work in simply-typed lambda calculus.
\end{example}

\begin{proposition}
  If $\langle \Theta, S \rangle$ is an $E$-unification problem in solved form, then $\mathsf{mgu}_E(S) \equiv_E \metasubst{\xi_S}$.
\end{proposition}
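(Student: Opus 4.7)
The plan is to show that $\metasubst{\xi_S}$ subsumes every $E$-unifier of $S$, so that $\{\metasubst{\xi_S}\}$ is itself a minimal complete set of $E$-unifiers. Combined with the uniqueness of $\mathsf{CSU}_E(S)$ up to an $\equiv_E$-preserving bijection, this forces any $\mathsf{CSU}_E(S)$ to be a singleton whose element is $\equiv_E$-equivalent to $\metasubst{\xi_S}$, which is by definition $\mathsf{mgu}_E(S)$. Since the preceding proposition already gives $\metasubst{\xi_S} \in U_E(S)$, what remains is to show $\metasubst{\xi_S} \preccurlyeq_E \metasubst{\eta}$ for an arbitrary $\metasubst{\eta} \in U_E(S)$; taking $\metasubst{\eta}$ itself as the witness, this reduces to proving $\metasubst{\eta} \circ \metasubst{\xi_S} \equiv_E \metasubst{\eta}$.

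Unfolding the definition of $\equiv_E$ on substitutions, the remaining goal is $(\metasubst{\eta} \circ \metasubst{\xi_S})(\metavar{n}{}[\overline{t}]) \equiv_E \metasubst{\eta}(\metavar{n}{}[\overline{t}])$ for each metavariable $\metavar{n}{} \in \Theta$ and each valid parameter list $\overline{t}$. For $\metavar{n}{}$ outside the domain of $\metasubst{\xi_S}$, both sides agree after unfolding and a routine inner induction on the parameters $\overline{t}$. The interesting case is $\metavar{n}{} = \metavar{m}{_i}$ for some solved constraint $\metavar{m}{_i}[\overline{z_i}] \meq t_i$: the solved-form conditions (distinctness of $\overline{z_i}$, and $\metavar{m}{_i}$ not occurring in $t_i$) let me commute $\metasubst{\xi_S}$ with the variable substitution $[\overline{z_i} \mapsto \overline{t}]$ implicit in $\metavar{m}{_i}[\overline{t}]$, obtaining $(\metasubst{\eta} \circ \metasubst{\xi_S})(\metavar{m}{_i}[\overline{t}]) = \metasubst{\eta}(t_i)[\overline{z_i} \mapsto \overline{\metasubst{\eta}(t)}]$. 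Since $\metasubst{\eta}$ unifies the $i$th constraint, $\metasubst{\eta}(t_i) \equiv_E \metasubst{\eta}(\metavar{m}{_i}[\overline{z_i}])$, and congruence of $\equiv_E$ under the variable substitution $[\overline{z_i} \mapsto \overline{\metasubst{\eta}(t)}]$ (a special case of the \textsc{subst} rule obtained by parametrising each $z_i$ as a zero-arity metavariable) yields $\metasubst{\eta}(\metavar{m}{_i}[\overline{t}])$, as required.

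The main obstacle is the metavariable case: care is needed in commuting $\metasubst{\xi_S}$ with the parameter substitution hidden in $\metavar{m}{_i}[\overline{t}]$ and in invoking congruence of $\equiv_E$ under variable substitution. Both are handled by the precise solved-form conditions and a standard reduction of variable substitution to the \textsc{subst} rule via parametrisation; neither step is deep, but they are the only places where the proof is not a straightforward unfolding of definitions. Once these are in place, the concluding jump from a singleton complete set to $\mathsf{mgu}_E(S) \equiv_E \metasubst{\xi_S}$ is immediate from \cref{definition:csu} and the preceding bijection proposition.
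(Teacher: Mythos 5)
Your proof is correct and follows essentially the same route as the paper: both reduce the claim to showing $\metasubst{\xi_S} \preccurlyeq_E \metasubst{\eta}$ for an arbitrary unifier $\metasubst{\eta}$ with $\metasubst{\eta}$ itself as the witness, and both establish $\metasubst{\eta \circ \xi_S} \equiv_E \metasubst{\eta}$ via the same chain (unfold $\metasubst{\xi_S}$ on $\metavar{m}{_i}[\overline{t}]$, use that $\metasubst{\eta}$ unifies the $i$th constraint, and close under the parameter substitution). You spell out the commutation of $\metasubst{\xi_S}$ with the parameter substitution and the final step from a singleton complete set to $\mathsf{mgu}_E(S)$ in more detail than the paper, but the argument is the same.
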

\begin{proof}
  It is enough to check that for any $E$-unifier $\metasubst{\theta} \in U_E(S)$ we have $\metasubst{\xi_S} \preccurlyeq_E \metasubst{\theta}$.
  Observe that $\metasubst{\theta} \equiv_E \metasubst{\theta \circ \xi_S}$ since for any constraint $(\Theta \mid \Gamma_\forall \vdash M[\overline{z}] \meq t : \type{\tau}) \in S$ such that $\metavar{m}{} : [\overline{\type{\sigma}}]\type{\tau} \in \Theta$, any context $\Gamma$, and any terms $\Theta \mid \Gamma \vdash t_i : \type{\sigma_i}$ (for all $i \in \{ 1, \ldots, n \}$) we have
  \begin{align*}
    \Xi \mid \Gamma
    &\vdash \metasubst{\theta} \metavar{m}{}[\overline{t}]
    \equiv_E \metasubst{\theta} t[\overline{z} \mapsto \overline{t}]
    \equiv_E \metasubst{\theta} (\metasubst{\xi_S} \metavar{m}{}[\overline{z}])[\overline{z} \mapsto \overline{t}]
    \equiv_E \metasubst{\theta} (\metasubst{\xi_S} \metavar{m}{}[\overline{t}]) : \type{\tau}
  \end{align*}
\end{proof}

%

\section{Unification Procedure}
\label{section:unification-procedure}

In this section, we introduce a unification procedure to solve arbitrary $E$-unification problems over second-order abstract syntax. We show that the procedure is sound at the end of this section, and we devote~\cref{section:completeness} for the completeness result. 

Our unification procedure has features inspired by classical $E$-unification and HOU algorithms. For the equational part, we took inspiration from the complete sets of transformations for general (first-order) $E$-unification of Gallier and Snyder~\cite{GallierSnyder1989}. For unification of metavariables, we took inspiration from Huet's higher-order pre-unification~\cite{Huet1975} and Jensen and Pietrzykowski's procedure~\cite{JensenPietrzykowski1976}. Some key insights from the recent work by Vukmirovic, Bentkamp, and Nummelin~\cite{VukmirovicBentkampNummelin2021} give us the opportunity to improve the algorithm further, however, we are not attempting to achieve an \emph{efficient} $E$-unification for second-order abstract syntax in this paper.

Note that we cannot directly reuse HOU ideas in our procedure, since we do not have full $\lambda$-calculus at our disposal. Instead we only have parametrised metavariables $\m[t_1, \ldots, t_n]$ which are analogous to applications of variables in HOU ($m \; t_1 \; \ldots \; t_n$). Still, we can adapt some ideas if they do not rely on normalisation or specific syntax of $\lambda$-calculus. For other ideas, we introduce simpler, yet more general versions. This allows us to preserve completeness, perhaps, sacrificing some efficiency, making the search space larger. While we believe it is possible to optimise our procedure to have virtually the same running time for unification problems in $\lambda$-calculus as HOU algorithms mentioned above, we leave such optimisations for future work.

To produce the unification procedure we follow and generalise some of the common steps that can be found in literature on HOU and first-order $E$-unification:
\begin{enumerate}
  \item Classify substitutions that will constitute partial solutions for certain classes of constraints. The idea is that an overall solution will emerge as a composition of partial solutions.
  \item Define transition rules that make small steps towards a solution.
  \item Determine when to stop (succeed or fail).
  \item If possible, organize rules in a proper order, yielding a unification procedure.
\end{enumerate}

\subsection{Bindings}

Now we define different elementary substitutions that will serve as partial solutions for some constraints in our unification procedure. Here, we generalise a list of bindings collected by Vukmirovic, Bentkamp, and Nummelin~\cite{VukmirovicBentkampNummelin2021}. From that list, Huet-style projection (also known as \emph{partial binding} in HOU literature) is not used. Instead, imitation for axioms and JP-style projection bindings cover all substitutions that can be generated by Huet-style projection bindings\footnote{Note, that Huet-style projection cannot be formulated in pure second-order abstract syntax as it explicitly relies on $\con{abs}$ and $\con{app}$. Thus, in $E$-unification we can recover such projections only by using axioms in some form. Kudasov~\cite{Kudasov2022} implements a heuristic that resembles a generalisation of Huet-style projections. We leave proper generalisations for future work.}. We also use a simplified version of iteration binding here, again, since it generates all necessary bindings when considered together with generalised imitation binding.

\begin{definition}
  We define the following types of bindings $\metasubst{\zeta}$:

\begin{description}
  \item[JP-style projection for $\m$.]
    If $\metavar{m}{} : [\type{\sigma_1}, \ldots, \type{\sigma_k}]\type{\tau}$
    and $\type{\sigma_i} = \type{\tau}$ then \\
    $\metasubst{\zeta} = [\metavar{m}{}[\overline{z}] \mapsto z_i]$
    is a JP-style projection binding

  \item[Imitation for $\m$.]
    If $\metavar{m}{} : [\type{\sigma_1}, \ldots, \type{\sigma_k}]\type{\tau}$,
    $\con{F} : (\type{\overline{\alpha_1}}.\type{\beta_1}, \ldots, \type{\overline{\alpha_n}}.\type{\beta_n}) \to \type{\tau}$
    and $\metavar{m}{_i} : [\type{\sigma_1}, \ldots, \type{\sigma_k}, \type{\overline{\alpha_i}}]\type{\beta_i}$ for all $i$,
    \\
    $\metasubst{\zeta} = [\metavar{m}{}[\overline{z}] \mapsto \con{F}(\overline{x_1}.\metavar{m}{_1}[\overline{z},\overline{x_1}], \ldots, \overline{x_n}.\metavar{m}{_n}[\overline{z},\overline{x_n}])]$
    is an imitation binding

  \item[Elimination for $\m$.]
    If $\metavar{m}{} : [\type{\sigma_1}, \ldots, \type{\sigma_k}]\type{\tau}$
    and $1 \leq j_1 < j_2 < \ldots < j_{n-1} < j_n \leq k$ such that
    $\metavar{e}{} : [\type{\sigma_{j_1}, \ldots, \type{\sigma_{j_n}}}]\type{\tau}$
    then \\
    $\metasubst{\zeta} = [\metavar{m}{}[\overline{z}] \mapsto \metavar{e}{}[z_{j_1}, \ldots, z_{j_n}]]$
    is a (parameter) elimination binding

  \item[Identification of $\m$ and $\metavar{n}{}$.]
    If $\metavar{m}{} : [\type{\sigma_1}, \ldots, \type{\sigma_k}]\type{\tau}$,
       $\metavar{n}{} : [\type{\nu_1}, \ldots, \type{\nu_l}]\type{\tau}$,
       $\metavar{i}{} : [\type{\sigma_1}, \ldots, \type{\sigma_k}, \type{\nu_1}, \ldots, \type{\nu_l}]\type{\tau}$, \\
       $\metavar{m}{_i} : [\type{\sigma_1}, \ldots, \type{\sigma_k}]\type{\nu_i}$ for all $i \in \{1, \ldots, l\}$, and
       $\metavar{n}{_j} : [\type{\nu_1}, \ldots, \type{\nu_l}]\type{\sigma_j}$ for all $j \in \{1, \ldots, k\}$ then \\
       $\metasubst{\zeta} = [\metavar{m}{}[\overline{z}] \mapsto \metavar{i}{}[\overline{z}, \metavar{m}{_1}[\overline{z}], \ldots, \metavar{m}{_l}[\overline{z}]], \;\;
    \metavar{n}{}[\overline{y}] \mapsto \metavar{i}{}[\metavar{n}{_1}[\overline{y}], \ldots, \metavar{n}{_k}[\overline{y}], \overline{y}]]$
    is an identification binding

  \item[Iteration for $\m$.]
    If $\metavar{m}{} : [\type{\sigma_1}, \ldots, \type{\sigma_k}]\type{\tau}$,
       $\con{F} : (\type{\overline{\alpha_1}}.\type{\beta_1}, \ldots, \type{\overline{\alpha_n}}.\type{\beta_n}) \to \type{\gamma}$,
       $\metavar{h}{} : [\type{\sigma_1}, \ldots, \type{\sigma_k}, \type{\gamma}]\type{\tau}$, and
       $\metavar{m}{_i} : [\type{\sigma_1}, \ldots, \type{\sigma_k}, \type{\overline{\alpha_i}}]\type{\beta_i}$ for all $i$,
    then \\
    $\metasubst{\zeta} = [\metavar{m}{}[\overline{z}] \mapsto \metavar{h}{}[\overline{z},\con{F}(\overline{x_1}.\metavar{m}{_1}[\overline{z}, \overline{x_1}], \ldots, \overline{x_n}.\metavar{m}{_n}[\overline{z}, \overline{x_n}])]]$
    is an iteration binding

\end{description}

\end{definition}

The iteration bindings allow to combine parameters of a metavariable in arbitrary ways. This is also particularly influenced by the fact that the type $\type{\gamma}$ used in the bindings may be arbitrary. This type of bindings introduce arbitrary branching in the procedure below, so should be used with caution in pragmatic implementations. Intuitively, we emphasize two distinct use cases for the iteration bindings:
\begin{enumerate}
  \item To extract a new term from one or more parameters by application of an axiom. In this case, we use iteration, where the root of one of the sides of an axiom is used as an operator $\con{F}$.
  \item To introduce new variables in scope. In this case, any operator that introduces at least one variable into scope is used in an iteration. This use case is important for the completeness of the procedure. See~\cref{example:iteration-by-scope}.
\end{enumerate}

\subsection{Transition Rules}

We will write each transition rule of the unification procedure in the form
$(\Theta \mid \Gamma_\forall \vdash s \meq t : \type{\tau}) \stackrel{\metasubst{\theta}}{\longrightarrow} \langle \Xi, S \rangle$, where $\metasubst{\theta} : \Theta \to \Xi$ is a metavariable substitution and $S$ is a new set of constraints that is supposed to replace $s \meq t$. We will often write $S$ instead of $\langle \Xi, S \rangle$ when $\Xi$ is understood from context.

We will now go over the rules that will constitute the $E$-unification procedure when put in proper order.
The first two rules are straightforward.

\begin{definition}[delete]
  If a constraint has the same term on both sides, we can \defemph{delete} it:
\begin{align*}
  &(\Theta \mid \Gamma_\forall \vdash t \meq t : \type{\tau})
  \stackrel{\metasubst{\mathsf{id}}}{\longrightarrow} \varnothing
\end{align*}
\end{definition}

\begin{definition}[decompose]
  We define two variants of this rule:
  \begin{enumerate}
    \item
      Let $\con{F} : (\type{\overline{\sigma_1}}.\type{\tau_1}, \ldots, \type{\overline{\sigma_n}}.\type{\tau_n}) \to \type{\tau}$, then we can \defemph{decompose} a constraint with $\con{F}$ on both sides into a set of constraints for each pair of (scoped) subterms:
    \begin{align*}
      &(\Theta \mid \Gamma_\forall \vdash \con{F}(\overline{\overline{x}.t}) \meq \con{F}(\overline{\overline{x}.s}) : \type{\tau})
      \stackrel{\metasubst{\mathsf{id}}}{\longrightarrow}
      \{ \Theta \mid \Gamma_\forall, \overline{x_i} : \type{\overline{\sigma_i}}
      \vdash t_i \meq s_i : \type{\tau_i} \}^{i \in \{1, \ldots, n\}}
    \end{align*}

    \item
      Let $\m : [\type{\sigma_1}, \ldots, \type{\sigma_n}] \type{\tau}$, then we can \defemph{decompose} a constraint with $\m$ on both sides into a set of constraints for each pair of parameters:
    \begin{align*}
      &(\Theta \mid \Gamma_\forall \vdash \m[\overline{t}] \meq \m[\overline{s}] : \type{\tau})
      \stackrel{\metasubst{\mathsf{id}}}{\longrightarrow}
      \{ \Theta \mid \Gamma_\forall
      \vdash t_i \meq s_i : \type{\sigma_i} \}^{i \in \{1, \ldots, n\}}
    \end{align*}
  \end{enumerate}
\end{definition}

\begin{example}
\begin{align*}
  &\Theta \mid \Gamma = (\metavar{m}{} : [\type{\sigma}]\type{\sigma \Rightarrow \sigma} \mid f : \type{\sigma \Rightarrow \sigma}) \\
  &\{\Theta \mid \Gamma
  \vdash \con{abs}(x.\con{app}(\metavar{m}{}[x], x)) \meq \con{abs}(x.\con{app}(f, x)) \} \\
  \stackrel{\metasubst{\mathsf{id}}}{\longrightarrow}
  & \{ \Theta \mid \Gamma, x : \type{\sigma}
\vdash \con{app}(\metavar{m}{}[x], x)) \meq \con{app}(f, x) \} \}
  \tag{decompose} \\
  \stackrel{\metasubst{\mathsf{id}}}{\longrightarrow}
  & \{ \Theta \mid \Gamma, x : \type{\sigma}
  \vdash \metavar{m}{}[x] \meq f, \quad
       \Theta \mid \Gamma, x : \type{\sigma}
  \vdash x \meq x \}
  \tag{decompose} \\
  \stackrel{\metasubst{\mathsf{id}}}{\longrightarrow}
  & \{ \Theta \mid \Gamma, x : \type{\sigma} \vdash \metavar{m}{}[x] \meq f \}
  \tag{delete}
\end{align*}
\end{example}

The next two rules are second-order versions of \emph{imitate} and \emph{project} rules used in many HOU algorithms. The idea is that a metavariable can either imitate the other side of the constraint, or simply project one of its parameters:

\begin{definition}[imitate]
For constraints with a metavariable $\metavar{m}{} : [\type{\overline{\sigma_s}}]\type{\tau}$
and an operator $\con{F} : (\type{\overline{\sigma_1}}.\type{\tau_1}, \ldots, \type{\overline{\sigma_n}}.\type{\tau_n}) \to \type{\tau}$
we can \defemph{imitate} the operator side using an imitation binding (metavariables $\overline{\metavar{t}{}}$ are fresh):
\begin{align*}
  (\Theta \mid \Gamma_\forall \vdash \metavar{m}{}[\overline{s}] \meq \con{F}(\overline{\overline{x}.t}) : \type{\tau})
  \stackrel{\underrightarrow{\;\;[\metavar{m}{}{[\overline{z_s}]} \mapsto \con{F}(\overline{\overline{x}.\metavar{t}{}{[\overline{z_s}, \overline{x}]}})]\;\;}}{\;}
  \{ \Theta \mid \Gamma_\forall \vdash \con{F}(\overline{\overline{x}.\metavar{t}{}{[\overline{s}, \overline{x}]}}) \meq \con{F}(\overline{\overline{x}.t}) : \type{\tau} \}
\end{align*}
\end{definition}

Note that \textbf{(imitate)} can be followed up by an application of the \textbf{(decompose)} rule.

\begin{definition}[project]
  For constraints with a metavariable $\metavar{m}{} : [\type{\overline{\sigma_s}}]\type{\tau}$ and a term $u : \type{\tau}$, if $\type{\sigma_i} = \type{\tau}$ then we can produce a JP-style projection binding for the parameter at position $i$:
\begin{align*}
  (\Theta \mid \Gamma_\forall \vdash \metavar{m}{}[\overline{s}] \meq u : \type{\tau})
  \stackrel{\underrightarrow{\;\;[\metavar{m}{}{[\overline{z}]} \mapsto z_i]\;\;}}{\;}
  \{ \Theta \mid \Gamma_\forall \vdash s_i \meq u : \type{\tau} \}
\end{align*}
\end{definition}

The next rule is concerned with matching one side of a constraint against one side of an axiom.
When matching with an axiom, we need to \emph{instantiate} it to the particular use (indeed, an axiom serves as a schema!). However, it is not sufficient to simply map metavariables of the axiom into fresh metavariables of corresponding types. Since we are instantiating axiom for a particular constraint which may have a non-empty $\Gamma_\forall$, it is important to add all those variables to each of the fresh metavariables\footnote{This is different to $E$-unification with first-order axioms, where metavariables do not carry their own context and can be unified with an arbitrary variable later.}:

\begin{definition}
\label{definition:instantiate-axiom}
  Let $\Gamma_\forall = (\overline{x} : \type{\overline{\alpha}})$
  and $\metasubst{\xi} : \Xi \mid \cdot \to \Theta \mid \Gamma_\forall$. \\
  We say $\metasubst{\xi}$ \defemph{instantiates the axiom}
  $\Xi \mid \cdot \vdash l \equiv r : \type{\tau}$
  \defemph{in context}
  $\Theta \mid \Gamma_\forall$ if
  \begin{enumerate}
    \item for any $(\metavar{m}{_i} : [\type{\overline{\sigma}}]\type{\tau}) \in \Xi$,
      $\metasubst{\xi}$ maps $\metavar{m}{_i}[\overline{t}]$ to $\metavar{n}{_i}[\overline{t}, \overline{x}]$;
    \item $\metavar{n}{_i} = \metavar{n}{_j}$ iff $i = j$ for all $i, j$.
  \end{enumerate}
\end{definition}

\begin{example}
  Let $\metasubst{\xi} = [\metavar{m}{}[z] \mapsto \metavar{m}{_1}[z,g,y], \metavar{n}{}[] \mapsto \metavar{n}{_1}[g,y]]$.
  Then, $\metasubst{\xi}$ instantiates the axiom
  \[
    \metavar{m}{} : [\type{\sigma}]\type{\tau}, \metavar{n}{} : []\type{\sigma}
      \mid \cdot
      \vdash \con{app}(\con{abs}(x.\metavar{m}{}[x]), \metavar{n}{}[]) \equiv \metavar{m}{}[\metavar{n}{}[]] : \type{\tau}
  \]
  in context $\metavar{m}{_1} : [\type{\sigma}, \type{\sigma \Rightarrow \tau}, \type{\sigma}]\type{\tau}, \metavar{n}{_1} : [\type{\sigma \Rightarrow \tau}, \type{\sigma}]\type{\sigma} \mid g : \type{\sigma \Rightarrow \tau}, y : \type{\sigma}$.
\end{example}

\begin{definition}[mutate]
  For constraints where one of the sides matches\footnote{We check that the roots of terms match. Technically, we do not have to perform this check and apply \textbf{(mutate)} rule for any axiom (non-deterministically), since full matching will be performed by the unification procedure.} an axiom in $E$:
\[
  \Xi \mid \cdot \vdash l \equiv r : \type{\tau}
\]
We rewrite the corresponding side (here, $\metasubst{\xi}$ instantiates the axiom in context $\Theta \mid \Gamma_\forall$).
\begin{align*}
  (\Theta \mid \Gamma_\forall \vdash t \meq s : \type{\tau})
  \stackrel{\underrightarrow{\;\;\metasubst{\mathsf{id}}}\;\;}{\;}
  \{ \Theta \mid \Gamma_\forall \vdash t \meq \metasubst{\xi} l : \type{\tau}\}
  \uplus \{ \Theta \mid \Gamma_\forall \vdash \metasubst{\xi} r \meq s : \type{\tau} \}
\end{align*}
\end{definition}

In general, we may rewrite in both directions.
However, it may be pragmatic to choose a single direction to some of the axioms (e.g. $\beta\eta$-reductions), while keeping others bidirectional (e.g. commutativity and associativity axioms).
Note that, unlike previous rules, the \textbf{(mutate)} rule can lead to infinite transition sequences.

The remaining rules deal with constraints with metavariables on both sides.
One rule attempts to unify distinct metavariables:

\begin{definition}[identify]
  When a constraint consists of a pair of distinct metavariables $\metavar{m}{} : [\type{\sigma_1}, \ldots, \type{\sigma_k}]\type{\tau}$ and $\metavar{n}{} : [\type{\gamma_1}, \ldots, \type{\gamma_l}]\type{\tau}$, we can use an identification binding (metavariables $\metavar{i}{}, \overline{\metavar{m}{'}}, \overline{\metavar{n}{'}}$ are fresh):
\begin{align*}
  ( \Theta \mid \Gamma_\forall \vdash \metavar{m}{}[\overline{s}] \meq \metavar{n}{}[\overline{t}] )
  \stackrel{\underrightarrow{\;\;[ \metavar{m}{}[\overline{z}] \mapsto \metavar{i}{}[\overline{z}, \overline{\metavar{m}{'}[\overline{z}]}], \metavar{n}{}[\overline{y}] \mapsto \metavar{i}{}[\overline{\metavar{n}{'}[\overline{y}]}, \overline{y}] ]\;\;}}{\;}
  \{
    \Theta \mid \Gamma_\forall \vdash
    \metavar{i}{}[\overline{s}, \overline{\metavar{m}{'}[\overline{s}]}]
    \meq
    \metavar{i}{}[\overline{\metavar{n}{'}[\overline{u}]}, \overline{u}]
  \}
\end{align*}
\end{definition}

Another rule attempts to unify identical metavariables with distinct lists of parameters:

\begin{definition}[eliminate]
  When a constraint has the same metavariable $\metavar{m}{} : [\type{\sigma_1}, \ldots, \type{\sigma_n}]\type{\tau}$ on both sides and there is a sequence $(j_k)_{k=1}^{n}$ such that $s_{j_k} = t_{j_k}$ for all $k \in \{1, \ldots, n\}$, then we can \defemph{eliminate} every other parameter and leave the remaining terms identical (metavariable $\metavar{e}{}$ is fresh):
\begin{align*}
  ( \Theta \mid \Gamma_\forall \vdash \metavar{m}{}[\overline{s}] \meq \metavar{m}{}[\overline{t}] )
  \stackrel{\underrightarrow{\;\;[ \metavar{m}{}[\overline{z}] \mapsto \metavar{e}{}[z_{j_1}, \ldots, z_{j_n}] ]\;\;}}{\;}
  \varnothing
\end{align*}
\end{definition}

The idea of the final rule is to extend a list of parameters with some combination of those that exist already.
For example, consider constraint $\forall x, y, z. \m[\con{pair}(x, y), z] \meq \metavar{n}{}[x, z]$. It is clear, that if we can work with a pair of $x$ and $y$, then we can work with them individually, since we can extract $x$ using $\con{fst}$ and $y$ using $\con{snd}$. Thus, a substitution $[\m[p, z] \mapsto \metavar{m}{_1}[p, z, \con{fst}(p)]]$ would result in a new constraint $\forall x, y, z. \metavar{m}{_1}[\con{pair}(x, y), z, \con{fst}(\con{pair}(x, y))] \meq \metavar{n}{}[x, z]$. This one can now be solved by applying \textbf{(identify)}, \textbf{(eliminate)}, and \textbf{(decompose)} rules that will lead us to $\forall x, y, z. \con{fst}(\con{pair}(x, y)) \meq x : \type{\sigma}$ which will be processed using \textbf{(mutate)} rule.

\begin{definition}[iterate]
  When a constraint consists of a pair of (possibly, identical) metavariables $\metavar{m}{} : [\type{\sigma_1}, \ldots, \type{\sigma_k}]\type{\tau}$ and $\metavar{n}{} : [\type{\gamma_1}, \ldots, \type{\gamma_l}]\type{\tau}$, we can use an iteration binding (metavariables $\metavar{h}{}, \overline{\metavar{k}{}}$ are fresh):
\begin{align*}
  ( \Theta \mid \Gamma_\forall \vdash \metavar{m}{}[\overline{s}] \meq \metavar{n}{}[\overline{t}] )
  \stackrel{\underrightarrow{\;\;[\metavar{m}{}[\overline{z}] \mapsto \metavar{h}{}[\overline{z},\con{F}(\overline{\overline{x}.\metavar{k}{}[\overline{z}, \overline{x}]})]]\;\;}}{\;}
  \{
    \Theta \mid \Gamma_\forall \vdash
    \metavar{h}{}[\overline{s}, \con{F}(\overline{\overline{x}.\metavar{k}{}[\overline{z}, \overline{x}]})]
    \meq
    \metavar{n}{}[\overline{t}]
  \}
\end{align*}
\end{definition}

The following example demonstrates the importance of iteration by an arbitrary operator to introduce variables into scope:

\begin{example}
  \label{example:iteration-by-scope}
  Consider a unification problem for simply-typed $\lambda$-calculus:
  \begin{align*}
    &\exists
    \m : [\type{\sigma \Rightarrow \sigma \Rightarrow \tau}](\type{\sigma \Rightarrow \tau}) \\
    &\quad
    \forall f : \type{\sigma \Rightarrow \sigma \Rightarrow \sigma \Rightarrow \sigma \Rightarrow \tau}. \\
    &\quad\quad
    \m[\lambda x. \lambda y. f \; x \; y \; x \; x] \meq \m[\lambda x. \lambda y. f \; y \; y \; x \; y]
    : \type{\sigma \Rightarrow \tau}
  \end{align*}
  It has the following $E$-unifier: $\metasubst{\zeta} = [\m[g] \mapsto \lambda z. g \; z \; z]$.
  To construct this unifier from bindings, we start with iteration binding $[\m[g] \mapsto \metavar{i}{}[g, \lambda z. \metavar{m}{_1}[g, z]]]$, introducing the lambda abstraction, which is followed by a projection $[\metavar{i}{}[g, r] \mapsto r]$, which is followed by another iteration (to introduce application), and so on.
\end{example}

Finally, we compile all transition rules into the unification procedure:

\begin{definition}
\label{definition:naive-unification}
  The \defemph{$E$-unification procedure} over an equational presentation $E$ is defined by repeatedly applying the following transitions (non-deterministically) until a stop:
  \begin{enumerate}
    \item If no constraints are left, then stop \textbf{\color{OliveGreen}(succeed)}.
    \item If possible, apply \textbf{(delete)} rule.
    \item If possible, apply \textbf{(mutate)} or \textbf{(decompose)} rule (non-det.).
    \item If there is a constraint consisting of two non-metavariables and none of the above transitions apply, stop \textbf{\color{red}(fail)}.
    \item If there is a constraint $\m[\ldots] \meq \con{F}(\ldots)$, apply \textbf{(imitate)} or \textbf{(project)} rules (non-det.).
    \item If there is a constraint $\m[\ldots] \meq x$, apply \textbf{(project)} rules (non-det.).
    \item If possible, apply \textbf{(identify)}, \textbf{(eliminate)}, or \textbf{(iterate)} rules (non-det.).
    \item If none of the rules above are applicable, then stop \textbf{\color{red}{(fail)}}.
  \end{enumerate}
\end{definition}

Many HOU algorithms \cite{Miller1991,LibalMiller2016} implement a rule (typically called \emph{eliminate}) that allows to eliminate metavariables, when a corresponding constraint is in solved form. Such a rule is not necessary here, as it is covered by a combination of \textbf{(imitate)}, \textbf{(decompose)}, \textbf{(delete)}, \textbf{(identify)}, and \textbf{(eliminate)} rules. However, it simplifies presentation of examples and also serves as a practical optimisation, so we include it as an optional rule:

\begin{definition}[eliminate*]
  When a constraint $C = (\Theta \mid \Gamma_\forall \vdash \m[\overline{z}] \meq u)$ is in solved form, we can eliminate it with a corresponding unifier $\metasubst{\xi_{\{C\}}} = [\m[\overline{z}] \mapsto u]$:
  \begin{align*}
  ( \Theta \mid \Gamma_\forall \vdash \metavar{m}{}[\overline{s}] \meq u )
  \stackrel{\underrightarrow{\;\;[\m[\overline{z}] \mapsto u]\;\;}}{\;}
  \varnothing
\end{align*}
\end{definition}

The \textbf{(eliminate*)} rule should have the same priority as \textbf{(delete)} in the procedure.


\begin{lemma}
\label{lemma:sound-steps}
  In the procedure defined in~\cref{definition:naive-unification}, each step is sound.
  That is,
  if $S \stackrel{\metasubst{\theta}}{\longrightarrow} S'$ is a single-step transition that the procedure takes and $\metasubst{\xi} \in U_E(S')$
  then $\metasubst{\xi \circ \theta} \in U_E(S)$.
\end{lemma}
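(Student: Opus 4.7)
My plan is to proceed by case analysis on the transition rule used to derive $S \stackrel{\metasubst{\theta}}{\longrightarrow} S'$. Since each rule touches only a single constraint $(\Theta \mid \Gamma_\forall \vdash s \meq t : \type{\tau})$ of $S$, and the remaining constraints of $S$ appear in $S'$ with $\metasubst{\theta}$ applied (trivially so when $\metasubst{\theta} = \metasubst{\mathsf{id}}$), the problem reduces to showing that, for the one affected constraint, $\metasubst{\xi \circ \theta}$ unifies $s \meq t$ whenever $\metasubst{\xi}$ unifies the replacement constraints in $S'$. The remaining constraints are handled uniformly by the \textbf{subst} rule of \cref{figure:second-order-equational-logic}.

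For the rules with $\metasubst{\theta} = \metasubst{\mathsf{id}}$, the argument reduces to direct applications of the equational logic. \textbf{(delete)} is handled by \textbf{refl}. Both variants of \textbf{(decompose)} follow from the congruence that \textbf{subst} induces on a common head: if the subterms in corresponding positions are equivalent modulo $E$, then so are the terms. \textbf{(mutate)} uses one additional ingredient: whenever $\metasubst{\iota}$ instantiates an axiom $\Xi \mid \cdot \vdash l \equiv r : \type{\tau}$ in context $\Theta \mid \Gamma_\forall$, one derives $\Theta \mid \Gamma_\forall \vdash \metasubst{\iota} l \equiv_E \metasubst{\iota} r : \type{\tau}$ via \textbf{axiom} followed by \textbf{subst}. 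Combining $\metasubst{\xi} s \equiv_E \metasubst{\xi} \metasubst{\iota} l$ and $\metasubst{\xi} \metasubst{\iota} r \equiv_E \metasubst{\xi} t$ with the axiom instance under $\metasubst{\xi}$ via \textbf{trans} then yields $\metasubst{\xi} s \equiv_E \metasubst{\xi} t$.

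For the rules with nontrivial $\metasubst{\theta}$ — \textbf{(imitate)}, \textbf{(project)}, \textbf{(identify)}, \textbf{(eliminate)}, and \textbf{(iterate)} — the uniform strategy is to observe that the replacement constraint displayed on the right-hand side of each rule is, by inspection, exactly the result of applying $\metasubst{\theta}$ to both sides of the original constraint (the fresh metavariables introduced by $\metasubst{\theta}$ are chosen outside $\Theta$, so no clash occurs, and the parameter lists $\overline{s}$, $\overline{t}$ on the left are either free of the substituted metavariable or are handled by recursively unfolding the substitution). Once this identification is made, we have $\metasubst{\xi \circ \theta}\, s \equiv_E \metasubst{\xi}(\metasubst{\theta} s)$ and symmetrically for $t$, so $\metasubst{\xi} \in U_E(S')$ directly delivers $\metasubst{\xi \circ \theta} \in U_E(\{s \meq t\})$. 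Reading off $\metasubst{\theta}$ from the rule and unfolding the definition of metavariable substitution is a mechanical calculation per rule.

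The main obstacle is the bookkeeping for the two-sided rules \textbf{(identify)}, \textbf{(eliminate)}, and \textbf{(iterate)}, where $\metasubst{\theta}$ can affect \emph{both} sides of the constraint simultaneously (e.g.\ in \textbf{(identify)}, $\metasubst{\theta}$ substitutes for both $\metavar{m}{}$ and $\metavar{n}{}$). For these, I would verify symbolically that applying $\metasubst{\theta}$ to $\metavar{m}{}[\overline{s}] \meq \metavar{n}{}[\overline{t}]$ yields, after the substitution on parameters unfolds, the displayed new constraint (e.g.\ $\metavar{i}{}[\overline{s}, \overline{\metavar{m}{'}[\overline{s}]}] \meq \metavar{i}{}[\overline{\metavar{n}{'}[\overline{t}]}, \overline{t}]$ in the case of \textbf{(identify)}). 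With this identification in hand, the lemma follows immediately; no appeal to $E$ beyond reflexivity and congruence is required in these cases.
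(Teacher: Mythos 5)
Your proposal is correct and follows essentially the same route as the paper's proof: reduce to the single affected constraint (the untouched constraints being carried over with $\metasubst{\theta}$ applied), dispatch \textbf{(delete)} and \textbf{(decompose)} by reflexivity and congruence, handle \textbf{(mutate)} by chaining $\metasubst{\xi}s \equiv_E \metasubst{\xi}(\metasubst{\iota}l) \equiv_E \metasubst{\xi}(\metasubst{\iota}r) \equiv_E \metasubst{\xi}t$, and treat the binding rules by observing that the new constraint is $\metasubst{\theta}$ applied to the old one. Your explicit remark about the parameter lists possibly containing the substituted metavariable is a detail the paper glosses over, but it does not change the argument.
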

\begin{proof}
  It is sufficient to show that each step is sound with respect to the constraint it acts upon.
  That is, we consider the step $\{C\} \stackrel{\metasubst{\theta}}{\longrightarrow} S''$ such that $C \in S$ and $S'' \subseteq S'$. By assumption $\metasubst{\xi} \in U_E(S')$ and thus also $\metasubst{\xi} \in U_E(S'')$.
  Note that for any constraint $D \in (S - \{C\})$ we have a corresponding constraint $D' \in (S' - S'')$ such that $D' = \metasubst{\theta}D$. Since $\metasubst{\xi}$ unifies $D'$ it follows that $\metasubst{\xi \circ \theta}$ unifies $D$.
  Thus, it is enough for us to show that $\metasubst{\xi \circ \theta}$ unifies $U_E(\{C\})$.

  We now go over the list of possible steps:
  \begin{itemize}
    \item \textbf{(delete)}: it is clear that any substitution unifies $C$;
    \item \textbf{(decompose)}: since $\metasubst{\xi}$ unifies all subterm pairs in $S''$, it also unifies $C$;
    \item \textbf{(imitate)}, \textbf{(project)}, \textbf{(identify)}, \textbf{(eliminate)}, \textbf{(iterate)}: all of these rules simply make a decision on how to substitute some metavariables (choose $\metasubst{\theta}$) and immediately apply that substitution. So, $S'' = \{\metasubst{\theta}C\}$ and since $\metasubst{\xi}$ unifies $S'$ then $\metasubst{\xi \circ \theta}$ unifies $S$.
    \item \textbf{(mutate)}: let $C = (\Theta \mid \Gamma_\forall \vdash s \meq t : \type{\tau})$ and we mutate according to axiom $(\Xi \mid \cdot \vdash l \equiv r : \type{\tau}) \in E$ with substitution $\metasubst{\zeta}$ instantiating this axiom. By assumption, $\metasubst{\xi}$ unifies both $s \meq \metasubst{\zeta}l$ and $\metasubst{\zeta}r \meq t$. Also, $\Theta \mid \Gamma_\forall \vdash \metasubst{\zeta}l \equiv_E \metasubst{\zeta}r : \type{\tau}$. In this rule, $\metasubst{\theta} = \metasubst{\mathsf{id}}$, and so we can show that $\metasubst{\xi \circ \theta} = \metasubst{\xi}$ unifies $s \meq t$:
  $
    \metasubst{\xi} s \equiv_E \metasubst{\xi}(\metasubst{\zeta}l) \equiv_E \metasubst{\xi}(\metasubst{\zeta}r) \equiv_E \metasubst{\xi} t
  $
  \end{itemize}
\end{proof}

\begin{theorem}
  The procedure defined in ~\cref{definition:naive-unification} is sound. That is, if
  $S \stackrel{\metasubst{\theta_1}}{\longrightarrow} S_1 \stackrel{\metasubst{\theta_2}}{\longrightarrow} \ldots \stackrel{\metasubst{\theta_n}}{\longrightarrow} \varnothing$ is a path produced by the procedure, then $\metasubst{\theta_1 \circ \theta_2 \circ \ldots \circ \theta_n} \in U_E(S)$.
\end{theorem}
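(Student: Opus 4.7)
The plan is to prove this by induction on the length $n$ of the transition path, with \cref{lemma:sound-steps} supplying the step case.

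For the base case $n = 0$ we have $S = \varnothing$, and the empty composition is the identity substitution $\metasubst{\mathsf{id}}$, which trivially lies in $U_E(\varnothing)$.

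For the inductive step, assume the claim for all paths of length $n-1$ and consider
\[
  S \stackrel{\metasubst{\theta_1}}{\longrightarrow} S_1 \stackrel{\metasubst{\theta_2}}{\longrightarrow} \ldots \stackrel{\metasubst{\theta_n}}{\longrightarrow} \varnothing.
\]
The suffix starting at $S_1$ is a procedure path of length $n-1$ ending in $\varnothing$, so by the inductive hypothesis $\metasubst{\xi} := \metasubst{\theta_2 \circ \ldots \circ \theta_n} \in U_E(S_1)$. Applying \cref{lemma:sound-steps} to the single step $S \stackrel{\metasubst{\theta_1}}{\longrightarrow} S_1$ together with this $\metasubst{\xi}$ yields $\metasubst{\xi \circ \theta_1} = \metasubst{\theta_1 \circ \theta_2 \circ \ldots \circ \theta_n} \in U_E(S)$, which is what we wanted.

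There is essentially no obstacle here: the step lemma has already performed the rule-by-rule case analysis (in particular handling the subtle \textbf{(mutate)} case, where $\metasubst{\theta_i} = \metasubst{\mathsf{id}}$ but soundness relies on the instantiated axiom connecting the new pair of constraints back to the original one), and the induction simply chains these single-step facts. The only point worth checking is the direction of composition: \cref{lemma:sound-steps} is stated precisely in the form that lets a previously computed unifier of $S_1$ be extended by postcomposition with $\metasubst{\theta_1}$ to obtain a unifier of $S$, so the composed substitution lines up with the one named in the theorem statement.
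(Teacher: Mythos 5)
Your proof is correct and matches the paper's approach: the paper simply declares the theorem a direct corollary of \cref{lemma:sound-steps}, and your induction on the path length is exactly the routine argument left implicit there. The one point you flag — that the composition $\metasubst{\xi \circ \theta_1}$ from the lemma coincides with the composite named in the theorem — is the right thing to check and is handled correctly.
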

\begin{proof}
  Direct corollary of~\cref{lemma:sound-steps}.
\end{proof}

\subsection{Example: simply-typed $\lambda$-calculus with pairs}

Here we present a simple extension of the simply-typed $\lambda$-calculus in second-order syntax:

\begin{example}
  \label{example:lambda-signature}
  A signature for simply-typed $\lambda$-calculus with pairs over a set of base types $A$ consists of:
  \begin{enumerate}
    \item a set of types $A^{\Rightarrow,\times}$ given by
      \begin{prooftree}
        \AxiomC{$\type{\tau} \in A$}
        \UnaryInfC{$\type{\tau} \in A^{\Rightarrow,\times}$}
        \DisplayProof\quad\quad\quad
        \AxiomC{$\type{\sigma}, \type{\tau} \in A^{\Rightarrow,\times}$}
        \UnaryInfC{$\type{\sigma \Rightarrow \tau} \in A^{\Rightarrow,\times}$}
        \DisplayProof\quad\quad\quad
        \AxiomC{$\type{\sigma}, \type{\tau} \in A^{\Rightarrow,\times}$}
        \UnaryInfC{$\type{\sigma \times \tau} \in A^{\Rightarrow,\times}$}
      \end{prooftree}

    \item a family of operators for all $\type{\sigma}, \type{\tau} \in A^{\Rightarrow,\times}$
      \begin{align}
        \abs^{\type{\sigma},\type{\tau}}
        &: \type{\sigma}.\type{\tau} \to (\type{\sigma \Rightarrow \tau})
        \tag{abstraction}\\
        \app^{\type{\sigma},\type{\tau}}
        &: (\type{\sigma \Rightarrow \tau}, \type{\sigma}) \to \type{\tau}
        \tag{application} \\
        \con{pair}^{\type{\sigma},\type{\tau}}
        &: (\type{\sigma}, \type{\tau}) \to \type{\sigma \times \tau}
        \tag{pair constructor} \\
        \con{fst}^{\type{\sigma},\type{\tau}}
        &: \type{\sigma \times \tau} \to \type{\sigma}
        \tag{first projection} \\
        \con{snd}^{\type{\sigma},\type{\tau}}
        &: \type{\sigma \times \tau} \to \type{\tau}
        \tag{second projection}
      \end{align}
  \end{enumerate}
\end{example}

\begin{example}
  Consider $\lambda$-term $(\lambda x. \m \; \langle f, x\rangle) : \type{\sigma \Rightarrow \tau}$. Assume $f : \type{\sigma \Rightarrow \tau}$. When representing this term in second-order syntax we have a choice:
  \begin{enumerate}
    \item if we treat $\m$ as a nullary metavariable, then
      \begin{align*}
        \m : []\type{((\sigma \Rightarrow \tau) \times \sigma) \Rightarrow \tau}
          \mid f : \type{\sigma \Rightarrow \tau}
          \vdash \con{abs}(x.\con{app}(\m[], \con{pair}(f, x))) : \type{\sigma \Rightarrow \tau}
      \end{align*}
    \item if we treat $\m$ as metavariable with one parameter, then
      \begin{align*}
        \m : [\type{(\sigma \Rightarrow \tau) \times \sigma}] \type{\tau}
          \mid f : \type{\sigma \Rightarrow \tau}
          \vdash \con{abs}(x.\m[\con{pair}(f, x)]) : \type{\sigma \Rightarrow \tau}
      \end{align*}
  \end{enumerate}
  These representations are equivalent and the next sections provide a formal notion of such an equivalence.
\end{example}

\begin{example}
  The equational presentation of the simply-typed $\lambda$-calculus with pairs:
  \begin{align*}
    \metavar{m}{} : [\type{\sigma}]\type{\tau}, \metavar{n}{} : []\type{\sigma}
      \mid \cdot
      &\vdash \con{app}(\con{abs}(x. \metavar{m}{}[x]), \metavar{n}{}[])
      \equiv \metavar{m}{}[\metavar{n}{}[]] : \type{\tau}
      \tag{$\beta_\lambda$} \\
    \metavar{m}{} : []\type{\sigma}, \metavar{n}{} : []\type{\tau}
      \mid \cdot
      &\vdash \con{fst}(\con{pair}(\metavar{m}{}[], \metavar{n}{}[]))
      \equiv \metavar{m}{}[] : \type{\sigma}
      \tag{$\beta_{\pi_1}$} \\
    \metavar{m}{} : []\type{\sigma}, \metavar{n}{} : []\type{\tau}
      \mid \cdot
      &\vdash \con{snd}(\con{pair}(\metavar{m}{}[], \metavar{n}{}[]))
      \equiv \metavar{n}{}[] : \type{\tau}
      \tag{$\beta_{\pi_2}$} \\
    \metavar{m}{} : []\type{\sigma \Rightarrow \tau}
      \mid \cdot
      &\vdash \con{abs}(x. \con{app}(\metavar{m}{}[], x))
      \equiv \metavar{m}{}[] : \type{\sigma \Rightarrow \tau}
      \tag{$\eta_\lambda$} \\
    \metavar{m}{} : []\type{\sigma \times \tau}
      \mid \cdot
      &\vdash \con{pair}(\con{fst}(\metavar{m}{}[]), \con{snd}(\metavar{m}{}[]))
      \equiv \metavar{m}{}[] : \type{\sigma \times \tau}
      \tag{$\eta_\times$}
  \end{align*}
\end{example}

See~\cref{figure:procedure-example} for an example of a search path generated by the $E$-unification procedure for the following problem in simply-typed $\lambda$-calculus:
  \[
    \exists \metavar{m}{} : [](\type{((\sigma \Rightarrow \tau) \times ((\sigma \Rightarrow \tau) \Rightarrow \tau)) \Rightarrow \tau}). \;
    (\forall g : \type{\sigma \Rightarrow \tau}, y : \type{\sigma}. \;
    \con{app}(\metavar{m}{}[], \con{pair}(g, \con{abs}(x. \con{app}(x, y))))
      \meq \con{app}(g, y) : \type{\tau})
  \]

\begin{figure}
  \begin{mdframed}
    \includegraphics{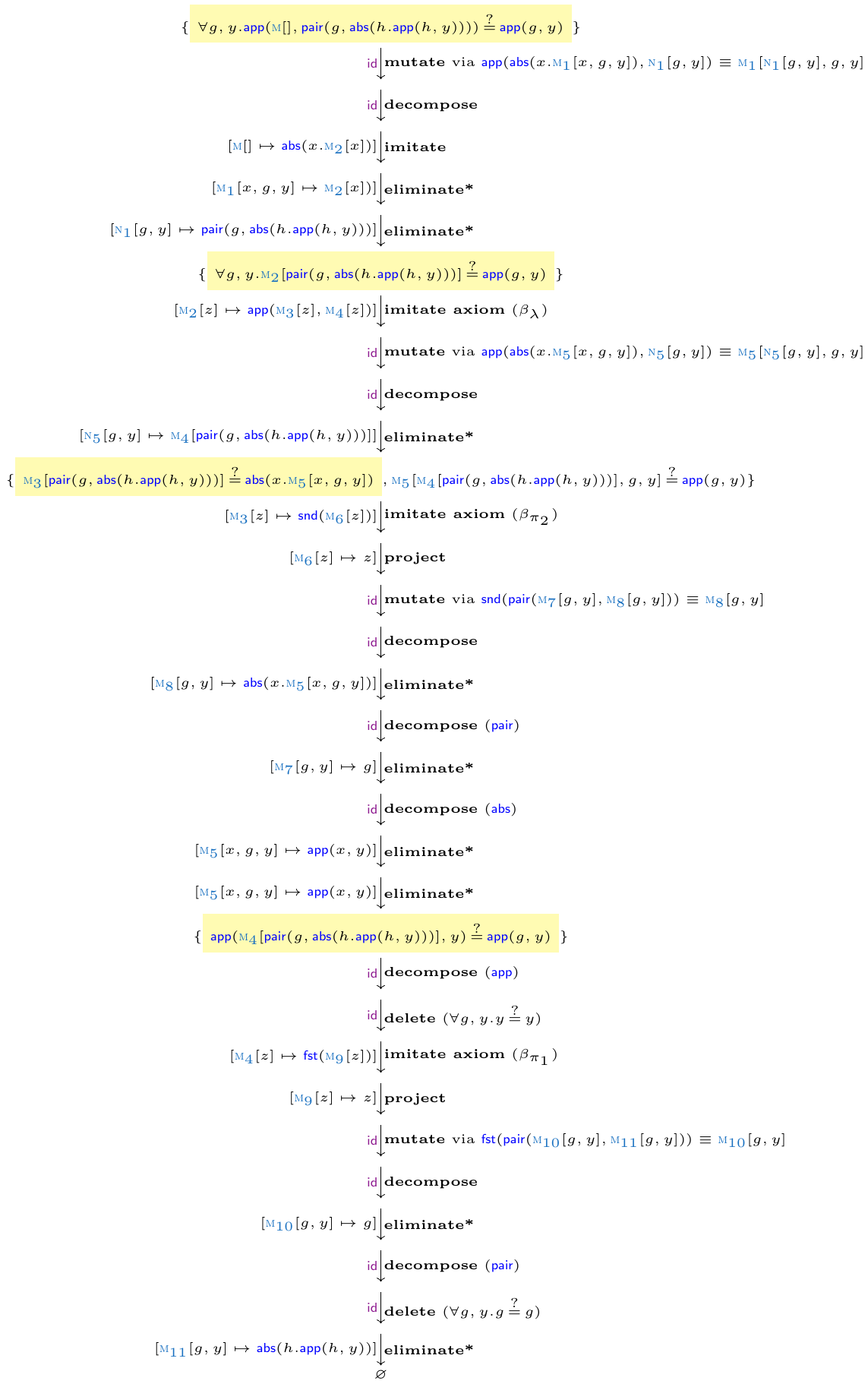}
  \end{mdframed}
  \caption{Sample search path, exhibiting many rules, for an $E$-unification problem in $\lambda$-calculus: $\exists \m : \type{((\sigma \Rightarrow \tau) \times \sigma) \Rightarrow \tau}. \; \forall g : \type{\sigma \Rightarrow \tau}, y : \type{\sigma}. \; \m \; \langle g, \lambda h. h \; y \rangle \meq g \; y : \type{\tau}$. Types are omitted to save space. Composition of substitutions yields $[\m[] \mapsto \con{abs}(x. \con{app}(\con{snd}(x), \con{fst}(x))), \ldots ]$.}
  \label{figure:procedure-example}
\end{figure}

\section{Proof of Completeness}
\label{section:completeness}

In this section we prove our main theorem, showing that our unification procedure is complete.

We start with a definition of mixed operators:

\begin{definition}
  \label{definition:mixed-operator}
  We say that an operator $\con{F} : (\type{\overline{\alpha_1}}.\type{\beta_1}, \ldots, \type{\overline{\alpha_n}}.\type{\beta_n}) \to \type{\gamma}$ is \defemph{mixed} iff $\type{\alpha_i}$ is empty and $\type{\alpha_j}$ is not empty for some $i$ and $j$.
\end{definition}

Dealing with mixed operators can be very non-trivial. In the following theorem, we assume that all operators either introduce scopes in all subterms, or in none. That is, for each operator $\con{F} : (\type{\overline{\alpha_1}}.\type{\beta_1}, \ldots, \type{\overline{\alpha_n}}.\type{\beta_n}) \to \type{\gamma}$, either $|\type{\overline{\alpha_i}}| = 0$ for all $i$ or $|\type{\overline{\alpha_i}}| > 0$ for all $i$. The assumption is justified since we can always encode a mixed operator as a combination of non-mixed operators. For example, $\con{let}(t_1, x.t_2)$ can be encoded as $\con{let}(t_1, \con{block}(x.t_2))$.

\begin{theorem}
  \label{theorem:completeness}
  Assuming no mixed operators are used, the procedure described in~\cref{definition:naive-unification} is complete, meaning that all paths from a root to all \textbf{\color{OliveGreen}(success)} leaves in the search tree constructed by the procedure, form a complete (but not necessarily minimal) set of $E$-unifiers. More specifically, let $E$ be an equational presentation and $\langle \Theta, S \rangle$ be an $E$-unification problem. Then for any $E$-unifier $\metasubst{\theta} \in U_E(S)$ there exists a path $S \stackrel{\metasubst{\xi}}{\longrightarrow} \varnothing$ such that $\metasubst{\xi} \preccurlyeq_E \metasubst{\theta}$.
\end{theorem}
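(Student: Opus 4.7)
The plan is to prove the theorem by establishing a stronger \emph{step-existence} lemma and iterating it. The step-existence lemma says: for any non-empty $\langle \Theta, S \rangle$ with $\metasubst{\theta} \in U_E(S)$, there is a transition $S \stackrel{\metasubst{\sigma}}{\longrightarrow} S'$ legal for the procedure together with some $\metasubst{\theta'} \in U_E(S')$ such that $\metasubst{\theta'} \circ \metasubst{\sigma} \equiv_E \metasubst{\theta}$ on $\Theta$ and a well-chosen complexity measure strictly decreases from $(S, \metasubst{\theta})$ to $(S', \metasubst{\theta'})$. A natural measure is the lexicographic combination of: (i) the multiset of sizes of $\metasubst{\theta}\metavar{m}{}[\overline{z}]$ as $\metavar{m}{}$ ranges over $\Theta$; (ii) for each constraint $(\Gamma_\forall \vdash s \meq t) \in S$, the length of a minimal rewriting chain $\metasubst{\theta}s \stackrel{*}{\longleftrightarrow}_E \metasubst{\theta}t$, which exists by Fiore and Hur's soundness and completeness of second-order term rewriting; and (iii) the structural size of the constraints. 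Iterating the step-existence lemma along the well-founded order produces a path $S \stackrel{\metasubst{\xi}}{\longrightarrow} \varnothing$ with $\metasubst{\xi} \preccurlyeq_E \metasubst{\theta}$ as desired.

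The proof of the step-existence lemma is by case analysis on a chosen constraint $C = (\Gamma_\forall \vdash s \meq t) \in S$. If $\metasubst{\theta}s = \metasubst{\theta}t$ syntactically and $s = t$ then \textbf{(delete)} works with $\metasubst{\theta'} = \metasubst{\theta}$; if $s$ and $t$ share a common operator or metavariable head then \textbf{(decompose)} does the same. When one side is a metavariable $\metavar{m}{}$, we inspect $\metasubst{\theta}\metavar{m}{}[\overline{s}]$: if it is $\con{F}$-headed and the other side is also $\con{F}$-headed, use \textbf{(imitate)} and split $\metasubst{\theta}$ across the fresh metavariables $\overline{\metavar{t}{}}$ so that $\metasubst{\theta'}\metavar{t}{_i}[\overline{z},\overline{x}]$ recovers the corresponding subterm of $\metasubst{\theta}\metavar{m}{}[\overline{s}]$; if instead $\metasubst{\theta}\metavar{m}{}[\overline{s}]$ equals one of its parameters modulo $E$, use \textbf{(project)}. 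The metavariable/metavariable case splits into \textbf{(eliminate)} when the heads coincide and $\metasubst{\theta}$ factors through a subset of parameters, \textbf{(identify)} when a common refinement of both heads exists under $\metasubst{\theta}$, and the remaining sub-case is handled together with the \textbf{(iterate)} case below. Finally, when both sides are headed by distinct operators, any minimal rewriting chain for $C$ must begin with an axiom application at the root on one of the sides, and that axiom instance directly witnesses the correctness of a \textbf{(mutate)} step with $\metasubst{\theta'} = \metasubst{\theta}$.

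The principal obstacle is showing the measure strictly decreases in the \textbf{(mutate)} and \textbf{(iterate)} cases, since in \textbf{(mutate)} component (i) is untouched; decrease must come from component (ii), which requires defining the residual $\metasubst{\theta'}$ so that the two new constraints $s \meq \metasubst{\xi}l$ and $\metasubst{\xi}r \meq t$ each admit a strictly shorter minimal rewriting chain than the original. For \textbf{(iterate)}, we must choose $\con{F}$ to correspond to an outermost scope-introducing layer that appears in $\metasubst{\theta}\metavar{m}{}[\overline{s}]$ but is obstructing equality with $\metasubst{\theta}\metavar{n}{}[\overline{t}]$; the residual $\metasubst{\theta'}$ is obtained by peeling off that layer and sending the fresh $\overline{\metavar{k}{}}$ and $\metavar{h}{}$ to strictly smaller fragments of $\metasubst{\theta}\metavar{m}{}[\overline{s}]$, ensuring a decrease in component (i). The no-mixed-operators hypothesis of the theorem enters here: it guarantees that a scope-introducing layer is uniformly available for iteration, so a single $\con{F}$ can be selected without having to simultaneously account for non-scoped subterms in the same operator (which would break the strict decrease).

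It remains to verify that the procedure does not \textbf{(fail)} along this path. Step 4 fails only on a pair of distinct operator-headed terms where no axiom's head matches either side; but a unifier forces the existence of a rewriting chain between them, which must apply an axiom somewhere along the way, so either \textbf{(mutate)} or \textbf{(decompose)} is applicable. Step 8 fails only when no remaining rule applies, which by the exhaustive case analysis above cannot occur when a unifier exists. Combining the step-existence lemma with the soundness result already proved yields the completeness claim.
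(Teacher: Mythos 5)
Your overall strategy --- a step-existence lemma guided by the given unifier $\metasubst{\theta}$, a well-founded measure combining rewriting-chain lengths with the size of the residual substitution, and termination by lexicographic descent --- is the same as the paper's. But there is a genuine gap in the case analysis. You dispatch to \textbf{(decompose)} whenever ``$s$ and $t$ share a common operator or metavariable head,'' and you reserve \textbf{(mutate)} for the case of \emph{distinct} operator heads. This is not sound for completeness: two terms with the same head operator may be $E$-equal only via a rewrite \emph{at the root}, in which case decomposing destroys the unifier. For instance, with $\beta$ in the theory, $\con{app}(\con{abs}(x.\metavar{m}{}[]), a) \meq \con{app}(\con{abs}(x.x), \metavar{m}{}[])$ is unified by $[\metavar{m}{}[] \mapsto a]$, both sides being $\con{app}$-headed, yet decomposition yields the unsolvable $\con{abs}(x.\metavar{m}{}[]) \meq \con{abs}(x.x)$. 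The paper's primary case split is therefore not on the syntactic heads but on whether the chain $\metasubst{\theta}s \stackrel{*}{\longleftrightarrow}_E \metasubst{\theta}t$ contains a root rewrite: if it does, apply \textbf{(mutate)} (regardless of the heads, and with the axiom instance taken at whatever position $j$ the root rewrite occurs --- your claim that a minimal chain between distinct-headed terms ``must \emph{begin} with'' a root axiom application is also false; it must merely \emph{contain} one); only in the absence of a root rewrite is \textbf{(decompose)} justified.

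Two further points. First, the hardest content of the proof is not where you place it. The flex--flex cases rest on a nontrivial structural lemma (the paper's \cref{lemma:identify}) constructing a common generalisation $w$ with $\metasubst{\theta}\metavar{m}{}[\overline{z}] = w[\overline{y} \mapsto \overline{v'}]$ and $\metasubst{\theta}\metavar{n}{}[\overline{y}] = w[\overline{z} \mapsto \overline{u'}]$ under the hypothesis that all rewrites happen inside parameter occurrences; you assert its conclusion (``a common refinement of both heads exists'') without argument. Second, the no-mixed-operators hypothesis is not used to make a scope-introducing layer ``uniformly available'' for \textbf{(iterate)}; it is used (via \cref{lemma:subterm-no-mixed}) in the sub-case where the obstructing subterm $w$ is metavariable-headed and not under a binder, to pull $w$ out of $\metasubst{\theta}\metavar{m}{}[\overline{z}]$ as an extra parameter and replace $\metasubst{\theta}$ by a strictly more general substitution. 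Finally, your measure places the multiset of sizes of $\metasubst{\theta}\metavar{m}{}[\overline{z}]$ first; check the \textbf{(identify)} case, where two entries of size $|w|$ are replaced by one entry of size $|w|$ plus fresh entries for $\overline{\metavar{y}{}}, \overline{\metavar{z}{}}$ --- when $|w|$ is small this multiset need not decrease, which is why the paper counts total operators and total metavariables instead, and places the rewriting-chain length as the leading component.
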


\begin{remark}
  The unification procedure may produce redundant unifiers. For example, consider the following unification problem in simply-typed $\lambda$-calculus:
  \begin{align*}
    \exists \m : [\type{\sigma}, \type{\sigma}]\type{\tau}, \metavar{n}{} : [\type{\sigma}](\type{\sigma \Rightarrow \tau}). \;
    \forall x : \type{\sigma}. \;
    \m[x, x] \meq \con{app}(\metavar{n}{}[x], x)
  \end{align*}

  Depending on whether we start with the \textbf{(imitate)} rule or the \textbf{(mutate)} first, we can arrive at the following unifiers:
  \begin{align*}
    \metasubst{\theta_1} &= [\m[z_1, z_2] \mapsto \con{app}(\metavar{n}{}[z_1], z_2)] \\
    \metasubst{\theta_2} &= [\m[z_1, z_2] \mapsto \metavar{t}{}[z_2, z_1], \; \metavar{n}{}[z_1] \mapsto \con{abs}(z. \metavar{t}{}[z_1, z])]
  \end{align*}
  It is clear that $\metasubst{\theta_1} \not= \metasubst{\theta_2}$, but $\metasubst{\theta_1} \preccurlyeq_E \metasubst{\theta_2}$ (witnessed by $[\metavar{n}{}[z_1] \mapsto \con{abs}(z. \metavar{t}{}[z_1, z])]$). Hence, the set of $E$-unifiers produced for this unification problem is not minimal (by~\cref{definition:csu}).
\end{remark}

Our proof is essentially a combination of the two approaches: one by Gallier and Snyder in their proof of completeness for general (first-order) $E$-unification~\cite{GallierSnyder1989}, and another one by Jensen and Pietrzykowski (JP)~\cite{JensenPietrzykowski1976}, refined by Vukmirovic, Bentkamp, and Nummelin (VBN)~\cite{VukmirovicBentkampNummelin2021} for full higher-order unification. In particular, we need to reuse some of the ideas from the latter when dealing with parametrised metavariables. However, we cannot reuse the idea of JP's $\omega$-simplicity or VBN's base-simplicity, as those are dependent crucially on the $\eta$-long terms, $\lambda$-abstraction, function application, which are not accessible to us in a general second-order abstract syntax. Instead, we reuse the ideas of Gallier and Snyder to understand when it is okay to decompose terms. To understand when to apply \textbf{(iterate)} rule, we also look at the rewrite sequence instead of $\omega$-simplicity of terms.

The main idea of the proof is to take the unification problem $S$ together with its $E$-unifier $\metasubst{\theta}$ and then choose one of the rules of the procedure guided by $\metasubst{\theta}$. Applying a rule updates constraints and the remaining substitution is also updated. To show that this process terminates, we introduce a measure that strictly decreases with each rule application.

\begin{definition}
  \label{definition:measure}
  Let $\metasubst{\theta} \in U_E(S)$. Then, define the \defemph{measure} on pairs $\langle S, \metasubst{\theta} \rangle$ as the lexicographic comparison of
  \begin{enumerate}
    \item sum of lengths of the rewriting sequences $\metasubst{\theta}s \stackrel{*}{\longleftrightarrow}_E \metasubst{\theta}t$ for all $s \meq t$ of $S$;
    \item total number of operators used in $\metasubst{\theta}$;
    \item total number of metavariables used in $\metasubst{\theta}$;
    \item sum of sizes of terms in $S$.
  \end{enumerate}

  We denote the quadruple above as $\ord(S, \metasubst{\theta})$.
\end{definition}

The following definition helps us understand when we should apply the \textbf{(project)} rule:

\begin{definition}
  A metavariable $\m : [\type{\overline{\sigma}}]\type{\tau}$ is \defemph{projective at $j$ relative to} $\metasubst{\theta}$ if $\metasubst{\theta}\m[\overline{z}] = z_j$.
\end{definition}

One of the crucial points in the proof is to understand whether we can apply \textbf{(identify)} or \textbf{(eliminate)} rules for constraints with two metavariables on both sides. The following lemma provides precise conditions for this, allowing for \textbf{(identify)} when metavariables are distinct or \textbf{(eliminate)} when they are equal.

\begin{lemma}
  \label{lemma:identify}
  Let $s = \m[\overline{u}]$ and $t = \metavar{n}{}[\overline{v}]$
  such that $\metasubst{\zeta}s \stackrel{*}{\longleftrightarrow}_E \metasubst{\zeta}t$.
  Let $s_1, \ldots, s_n$ be the subterms of $\metasubst{\zeta}s$,
      $t_1, \ldots, t_n$ the subterms of $\metasubst{\zeta}t$
  such that the rewriting sequence
    $\metasubst{\zeta}s \stackrel{*}{\longleftrightarrow}_E \metasubst{\zeta}t$
  corresponds to the union of independent rewritings
    $s_i \stackrel{*}{\longleftrightarrow}_E t_i$ for all $i \in \{1, \ldots, n\}$.
  If for all $i$ we have
    either that $s_i$ is a subterm of an occurrence of $\metasubst{\zeta}u_{j_i}$
    or that     $t_i$ is a subterm of an occurrence of $\metasubst{\zeta}v_{j_i}$,
  then there exist terms $w$, $\overline{u'}$, $\overline{v'}$ such that
  \begin{enumerate}
    \item
      $\overline{u'}[\overline{y} \mapsto \metasubst{\zeta}\overline{v}]
      \equiv_E \metasubst{\zeta}\overline{u}$ and
      $\overline{v'}[\overline{z} \mapsto \metasubst{\zeta}\overline{u}]
      \equiv_E \metasubst{\zeta}\overline{v}$,
    \item
      $\metasubst{\zeta}\m[\overline{z}] = w[\overline{y} \mapsto \overline{v'}]$
      and
      $\metasubst{\zeta}\metavar{n}{}[\overline{y}] = w[\overline{z} \mapsto \overline{u'}]$.
  \end{enumerate}
\end{lemma}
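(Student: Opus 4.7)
The plan is to extract a common operator skeleton from $\metasubst{\zeta}s$ and $\metasubst{\zeta}t$ and then recognise inside it the desired shared term $w$ parametrised in both $\overline{z}$ and $\overline{y}$. Let $M = \metasubst{\zeta}\m[\overline{z}]$ and $N = \metasubst{\zeta}\metavar{n}{}[\overline{y}]$, so that $\metasubst{\zeta}s = M[\overline{z} \mapsto \metasubst{\zeta}\overline{u}]$ and $\metasubst{\zeta}t = N[\overline{y} \mapsto \metasubst{\zeta}\overline{v}]$. Since the rewriting sequence $\metasubst{\zeta}s \stackrel{*}{\longleftrightarrow}_E \metasubst{\zeta}t$ decomposes into independent sub-rewrites $s_i \stackrel{*}{\longleftrightarrow}_E t_i$ at pairwise disjoint positions, there is a multi-hole context $C$ with $\metasubst{\zeta}s = C[s_1, \ldots, s_n]$ and $\metasubst{\zeta}t = C[t_1, \ldots, t_n]$. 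Intuitively, $C$ records the operator skeleton that both sides already share syntactically outside of the rewrite sites.

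Next I would \emph{lift} $C$ into the term $w$. Traversing $C$ from the root, at each position $p$ I track whether $p$ sits inside an occurrence of some substituted parameter $\metasubst{\zeta}u_k$ (inherited from $M$'s skeleton) and/or inside some $\metasubst{\zeta}v_l$ (inherited from $N$'s skeleton). At positions where $M$'s skeleton branches into a substituted $\overline{z}$-parameter, I place the variable $z_k$ in $w$; and symmetrically $y_l$ for $N$'s $\overline{y}$-parameters. The hypothesis of the lemma guarantees that every hole of $C$ falls strictly inside at least one such parameter occurrence on at least one of the two sides, so this lifting procedure is well-defined at every hole and produces a single term $w$ containing both $\overline{z}$ and $\overline{y}$ variables.

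I would then define $\overline{u'}$ and $\overline{v'}$ by reading off the subterms that sit beneath the holes of $w$. Specifically, $v'_l$ is the subterm of $M$ located at the $l$-th $y$-position of $w$, built out of $\overline{z}$ and the pieces of $\metasubst{\zeta}\overline{u}$ nested under it; $u'_k$ is defined symmetrically from $N$. Condition (2), that $\metasubst{\zeta}\m[\overline{z}] = w[\overline{y} \mapsto \overline{v'}]$ and $\metasubst{\zeta}\metavar{n}{}[\overline{y}] = w[\overline{z} \mapsto \overline{u'}]$, then holds by construction of $w$ from the $M$- and $N$-skeletons. For condition (1), the $\equiv_E$-equalities follow by invoking the very rewrites $s_i \stackrel{*}{\longleftrightarrow}_E t_i$ underneath each hole: on sites assigned to the $M$-side, the chain converts a subterm of $\metasubst{\zeta}u_k$ into the subterm of $\metasubst{\zeta}v$ that appears inside $u'_k$, and symmetrically on the $N$-side; soundness of second-order term rewriting turns each such chain into an $\equiv_E$-equality.

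The main obstacle is the bookkeeping of positions and the choice of a coherent ``blame'' assignment for rewrite sites: when a hole of $C$ is simultaneously inside both an $M$-parameter and an $N$-parameter occurrence, either side may absorb the rewrite, and the construction of $w$ has to commit to a consistent choice. A clean way to manage this is by induction on the combined size of $M$ and $N$ (or on the number of rewrite sites), peeling off a layer of common operator structure at a time and recursing. The assumption of \cref{theorem:completeness} that no mixed operators occur is what keeps parameter-occurrence boundaries clean enough for this inductive step to go through.
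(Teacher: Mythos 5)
Your proposal is correct and follows essentially the same route as the paper: the paper likewise builds the shared skeleton $w$ by a simultaneous structural recursion on $\metasubst{\zeta}\m[\overline{z}]$ and $\metasubst{\zeta}\metavar{n}{}[\overline{y}]$, matching identical operator/metavariable heads (justified, as you note, by the hypothesis that every rewrite site lies inside a parameter occurrence) and emitting $z_k$ or $y_k$ at parameter positions while recording the opposite side's subterm as $u'_k$ or $v'_k$. The only cosmetic difference is that the paper recurses directly on the uninstantiated bodies rather than first forming the common multi-hole context of the ground terms and lifting it; also note the lemma itself does not need the no-mixed-operators assumption, which is only invoked later for \cref{lemma:subterm-no-mixed}.
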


\begin{proof}
  We define an auxiliary family of terms
  $\Xi \mid \overline{x} : \type{\overline{\alpha}} \vdash w'(l, r) : \type{\tau}$
  for pairs of terms
  $\Xi \mid \overline{x} : \type{\overline{\alpha}}, \overline{z} : \type{\overline{\gamma}} \vdash l : \type{\tau}$ and
  $\Xi \mid \overline{x} : \type{\overline{\alpha}}, \overline{y} : \type{\overline{\beta}} \vdash r : \type{\tau}$
  such that $l$ is a subterm of $\metasubst{\zeta}\m[\overline{z}]$
        and $r$ is a subterm of $\metasubst{\zeta}\metavar{n}{}[\overline{y}]$
  satisfying
  $l[\overline{z} \mapsto \metasubst{\zeta}\overline{u}]
  \equiv_E r[\overline{y} \mapsto \metasubst{\zeta}\overline{v}]$.
  We define $w'(l, r)$ inductively on the structure of $l$ and $r$, maintaining
  $l \equiv_E w'(l, r)[\overline{y} \mapsto \overline{v'}]$
  and
  $r \equiv_E w'(l, r)[\overline{z} \mapsto \overline{u'}]$:
  \begin{enumerate}
    \item if $l = x_i$ or $r = x_i$, then $l = r$ and $w'(l, r) = x_i$;
    \item if $l = z_k$ then $w'(l, r) = z_k$ and $u_k' = r$;
    \item if $r = y_k$ then $w'(l, r) = y_k$ and $v_k' = l$;
    \item if $l = \con{F}(\overline{\overline{a}.p})$ and $r = \con{F}(\overline{\overline{a}.q})$ then
      $w'(l, r) = \con{F}(\overline{\overline{a}.w'(p, q)})$; note that $w'(p_i,q_i)$ is well-defined for all $i$, since $l[\overline{z} \mapsto \metasubst{\zeta}\overline{u}] \equiv_E r[\overline{y} \mapsto \metasubst{\zeta} \overline{v}]$ implies component-wise equality $p_i[\overline{z} \mapsto \metasubst{\zeta} \overline{u}] \equiv_E q_i[\overline{y} \mapsto \metasubst{\zeta} \overline{v}]$ for all $i$. This is true, since otherwise we are rewriting (at root) both $l$ and $r$, but $l[\overline{z} \mapsto \metasubst{\zeta} \overline{u}]$ nor $r[\overline{y} \mapsto \metasubst{\zeta} \overline{v}]$ corresponds to a parameter occurrence $\metasubst{\zeta}u_{j}$ or $\metasubst{\zeta}v_j$ in terms $\metasubst{\zeta}s$ or $\metasubst{\zeta}t$ correspondingly.
    \item if $l = \m[\overline{p}]$ and $r = \m[\overline{q}]$ then
      $w'(l, r) = \m[\overline{w'(p, q)}]$; here, $w'(p_i, q_i)$ is well-defined for all $i$, similarly to the previous case.
  \end{enumerate}
  If $u_k$ (or $v_k$) has not been defined for some $k$, it means that a corresponding parameter is not essential and can be eliminated. We set\footnote{alternatively, we could have adjusted the statement to only mention subsets of variables $\overline{z}$ and $\overline{y}$ that are used at least once} such $u_k$ to be a fresh metavariable $\metavar{u}{_k}[]$.
  We set $w = w'(\metasubst{\zeta}\m[\overline{z}], \metasubst{\zeta}\metavar{n}{}[\overline{y}])$. By construction,
  $
    \metasubst{\zeta}\m[\overline{z}]
    = w[\overline{y} \mapsto \overline{v'}]
  $
  and
  $
    \metasubst{\zeta}\metavar{n}{}[\overline{y}]
    = w[\overline{z} \mapsto \overline{u'}]
  $.
\end{proof}

\begin{corollary}
  \label{lemma:eliminate}
  Let $s = \m[\overline{u}]$ and $t = \m[\overline{v}]$
  such that $\metasubst{\zeta}s \stackrel{*}{\longleftrightarrow}_E \metasubst{\zeta}t$.
  Let $s_1, \ldots, s_n$ be the subterms of $\metasubst{\zeta}s$,
      $t_1, \ldots, t_n$ the subterms of $\metasubst{\zeta}t$
  such that the rewriting sequence
    $\metasubst{\zeta}s \stackrel{*}{\longleftrightarrow}_E \metasubst{\zeta}t$
  corresponds to the union of independent rewritings
    $s_i \stackrel{*}{\longleftrightarrow}_E t_i$ for all $i \in \{1, \ldots, n\}$.
  If for all $i$ we have
    either that $s_i$ is a subterm of an occurrence of $\metasubst{\zeta}u_{j_i}$
    or that     $t_i$ is a subterm of an occurrence of $\metasubst{\zeta}v_{j_i}$,
  then, there exists a sequence $1 \leq j_1 < \ldots < j_k \leq $
  such that 
  and $FV(\metasubst{\zeta}\m[\overline{z}]) = \{ z_{j_1}, \ldots, z_{j_k} \}$
  and $u_{j_i} \equiv_E v_{j_i}$ for all $i$.
\end{corollary}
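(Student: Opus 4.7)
The plan is to derive the corollary as a direct specialization of \cref{lemma:identify} to the case where both head metavariables coincide. First I would invoke \cref{lemma:identify} with $\metavar{n}{} := \m$ and parameter lists $\overline{u}, \overline{v}$ as given, obtaining a witness term $w$ together with auxiliary sequences $\overline{u'}, \overline{v'}$ satisfying the four identities stated there. Since the hypotheses of \cref{lemma:identify} match the hypotheses here verbatim after the identification $\metavar{n}{} = \m$, this invocation is immediate.

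The key structural observation is that when both sides share the same head metavariable, $\metasubst{\zeta}\m[\overline{z}]$ and $\metasubst{\zeta}\m[\overline{y}]$ coincide modulo the renaming $\overline{z} \mapsto \overline{y}$. I would use this to argue that the recursive construction of $w'$ in the proof of \cref{lemma:identify} descends into matching positions on both sides: at every operator or inner-metavariable node the structural case applies, and at each parameter leaf $z_k$ (necessarily matched by $y_k$ on the other side) the parameter case fires, setting $w'(z_k, y_k) = z_k$ with $u'_k := y_k$. Consequently $w$ coincides with $\metasubst{\zeta}\m[\overline{z}]$ itself, so the positions at which parameter leaves occur in $w$ are precisely the indices of the free parameter variables of $\metasubst{\zeta}\m[\overline{z}]$; taking these as $j_1 < \ldots < j_k$ yields the characterization $FV(\metasubst{\zeta}\m[\overline{z}]) = \{ z_{j_1}, \ldots, z_{j_k} \}$.

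To conclude the equational part, I would specialize the identity $\overline{u'}[\overline{y} \mapsto \metasubst{\zeta}\overline{v}] \equiv_E \metasubst{\zeta}\overline{u}$ provided by \cref{lemma:identify} at position $j_i$, using $u'_{j_i} = y_{j_i}$. This immediately yields $\metasubst{\zeta}v_{j_i} \equiv_E \metasubst{\zeta}u_{j_i}$, which is the desired $u_{j_i} \equiv_E v_{j_i}$ (read after applying $\metasubst{\zeta}$). Positions outside $\{j_1, \ldots, j_k\}$ are irrelevant because the corresponding parameters do not appear in $\metasubst{\zeta}\m[\overline{z}]$ at all, so no equation between $u_k$ and $v_k$ need be asserted there.

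The main obstacle I anticipate lies in the case-priority convention in the construction of $w'$: at a parameter leaf where simultaneously $l = z_k$ and $r = y_k$, both cases 2 and 3 of the recursion are enabled. The resolution is to exploit the symmetry between $(\overline{z}, \overline{u'})$ and $(\overline{y}, \overline{v'})$ that arises in the specialization $\m = \metavar{n}{}$: either choice of case ordering yields witnesses that agree on the indices $j_i$ and on the pointwise equivalences required, so the free-variable characterization and the equations $u_{j_i} \equiv_E v_{j_i}$ follow without ambiguity. A secondary minor point to address is the handling of positions whose parameters were replaced by fresh metavariables in \cref{lemma:identify}; these are exactly the positions outside the enumerated $j_i$, so they do not interfere with either conclusion of the corollary.
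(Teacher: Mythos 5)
Your proposal is correct and matches the paper's intent exactly: the paper states this as an unproved corollary of \cref{lemma:identify}, to be obtained by specialising $\metavar{n}{} := \m$, observing that $w$ is then $\metasubst{\zeta}\m[\overline{z}]$ itself so its parameter leaves give the free-variable characterisation, and reading off $\metasubst{\zeta}u_{j_i} \equiv_E \metasubst{\zeta}v_{j_i}$ from the first identity of the lemma at the surviving indices. Your handling of the case-priority ambiguity at paired leaves and of the fresh-metavariable positions is a sensible filling-in of details the paper leaves implicit.
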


The following lemma will help us generalize solutions in~\cref{proof:generalize} of the proof below.

\begin{lemma}
  \label{lemma:subterm-no-mixed}
  Let $\Xi \mid \overline{x} : \type{\overline{\alpha}} \vdash w : \type{\sigma}$ be
  a subterm of $\Xi \mid \overline{x} : \type{\overline{\alpha}} \vdash t : \type{\tau}$.
  If $t$ does not contain mixed operators,
  then there exists a substitution
  $\metasubst{\zeta_{w, t}} = [\metavar{h}{}[z, \overline{y}] \mapsto h]$
  and a collection of terms $\Xi \mid \overline{x} : \type{\overline{\alpha}} \vdash \overline{s} : \overline{\type{\beta}}$,
  such that each $s_i$ is a subterm of $t$ and
  $\metasubst{\zeta_{w, t}}\metavar{h}{}[w, \overline{s}] = t$.
\end{lemma}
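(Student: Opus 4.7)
The plan is to proceed by structural induction on $t$, walking along the path from its root down to the chosen occurrence of $w$ and replacing that occurrence with a fresh parameter variable $z$. For the base case $w = t$, I would take $\metavar{h}{} : [\type{\sigma}]\type{\tau}$, set $h := z$ and let $\overline{s}$ be empty; then $\metasubst{\zeta_{w,t}}\metavar{h}{}[w] = z[z \mapsto w] = w = t$. In the inductive step, $t$ is either an operator application $\con{F}(\overline{u_1}.t_1, \ldots, \overline{u_k}.t_k)$ or a metavariable application $\metavar{m}{}[\overline{t}]$, and the occurrence of $w$ sits inside some $t_{j_0}$. I would invoke the induction hypothesis on $w$ as a subterm of $t_{j_0}$ (in context $\overline{x}, \overline{u_{j_0}}$ when the operator binds variables) to obtain some $h_{j_0}$ and $\overline{s'}$, and then rebuild $h$ by keeping the other subterms of $t$ intact and splicing $h_{j_0}$ in place of $t_{j_0}$, taking $\overline{s} := \overline{s'}$.

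Correctness of the equation $\metasubst{\zeta_{w,t}}\metavar{h}{}[w, \overline{s}] = t$ then follows from the fact that $w$ and each $s_i$ are typable in $\overline{x}$ and are hence free of any binder $\overline{u}$ that $h$ may introduce above the corresponding hole; consequently the capture-avoiding substitution of $w$ for $z$ (and $\overline{s}$ for $\overline{y}$) reconstructs $t$ syntactically. The role of the non-mixed-operators hypothesis is to keep this recursion uniform: within every operator application encountered along the path, either all subterms extend the context by a non-empty binder or none do, which guarantees that the induction hypothesis can be invoked with a single consistent choice of extended context and that the reassembled $h$ is itself a well-formed second-order term rather than a term with inconsistent scoping of $z$ and $\overline{y}$.

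The main obstacle I anticipate is the bookkeeping around the freshly introduced parameters and the scopes inside $h$. This should be resolved by a careful $\alpha$-renaming of $\metavar{h}{}$, $z$, and $\overline{y}$ to avoid name clashes with $t$ and with the parameters produced by recursive invocations, together with the observation that neither $w$ nor any $s_i$ mentions a variable bound along the path from the root of $t$ to the corresponding hole, so capture-avoidance during the final grafting is automatic.
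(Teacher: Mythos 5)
Your induction has the right skeleton and the base case matches the paper's, but the inductive step deviates in a way that loses the lemma's actual content. In the paper's construction the sibling subterms $t_1,\ldots,t_{i-1},t_{i+1},\ldots,t_n$ of the argument containing $w$ are \emph{not} kept inside $h$: they are abstracted out as the extra parameters $\overline{y}$ and returned as part of the collection $\overline{s}$, i.e.\ $\metasubst{\zeta_{w,t}} = [\metavar{h}{}[z,y_1',\ldots,y_k',y_1,\ldots,y_{n-1}] \mapsto \con{F}(y_1,\ldots,y_{i-1},h_{t_i},y_i,\ldots,y_{n-1})]$, and likewise in the metavariable case. Your version keeps the siblings inline in $h$ and sets $\overline{s} := \overline{s'}$, which by your own base case makes $\overline{s}$ empty at every stage. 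That vacuously satisfies the literal existential, but it discards exactly the data the completeness proof (\cref{proof:generalize}) extracts from this lemma, namely the decomposition of $\metasubst{\theta_i}\m[\overline{z}]$ into a head $\metavar{h}{}$ applied to the distinguished subterm $w$ \emph{together with} the sibling subterms $\overline{s}$; the fact that your construction never needs the no-mixed-operators hypothesis should have been a warning sign.

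This is also why your explanation of that hypothesis is off. The paper's first observation is that, since $w$ is typed in the \emph{same} context $\overline{x}:\type{\overline{\alpha}}$ as $t$, its occurrence is not under any binder; hence every operator on the path to $w$ binds nothing in the argument containing $w$, and non-mixedness then forces \emph{all} arguments of that operator to be binder-free. This is what makes the siblings typable in $\overline{x}:\type{\overline{\alpha}}$ and therefore eligible to appear in $\overline{s}$, which the statement requires to live in that context. The case you contemplate, recursing ``in context $\overline{x},\overline{u_{j_0}}$ when the operator binds variables,'' cannot arise under the lemma's hypotheses; and if you did pursue it, the induction hypothesis would not apply (it requires $w$ and the enclosing subterm to share a context), and the $\overline{s'}$ produced by the recursive call would no longer be typable in $\overline{x}:\type{\overline{\alpha}}$.
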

\begin{proof}
  Note that $w$ and $\overline{s}$ are subterms of $t$ and are not under binders (since they have the same variable context). Then, by induction on the structure of $t$:
  \begin{enumerate}
    \item if $t = w$, then $\metasubst{\zeta_{w, t}} = [\metavar{h}{}[z] \mapsto z]$;
    \item if $t = \con{F}(\overline{x_1}.t_1, \ldots, \overline{x_n}.t_n)$
      then, since $t$ does not contain mixed operators, $\overline{x_i}$ is empty for all $i$.
      Now, if $w$ is a subterm of $t_i$
      and $\metasubst{\zeta_{w, t_i}} = [\metavar{h}{_{t_i}}[z, y_1', \ldots, y_k'] \mapsto h_{t_i}]$
      then $\metasubst{\zeta} = [\metavar{h}{}[z, y_1', \ldots, y_k', y_1, \ldots, y_{n-1}] \mapsto \con{F}(y_1, \ldots, y_{i-1}, h_{t_i}, y_i, \ldots, y_{i-1})]$.
    \item if $t = \metavar{n}{}[t_1, \ldots, t_n]$ such that $w$ is a subterm of $t_i$
      and $\metasubst{\zeta_{w, t_i}} = [\metavar{h}{_{t_i}}[z, y_1', \ldots, y_k'] \mapsto h_{t_i}]$
      then $\metasubst{\zeta} = [\metavar{h}{}[z, y_1', \ldots, y_k', y_1, \ldots, y_{n-1}] \mapsto \metavar{n}{}[y_1, \ldots, y_{i-1}, h_{t_i}, y_i, \ldots, y_{i-1}]]$.
  \end{enumerate}
  Note that case of $t = x$ is impossible unless $t = w$ (case 1).
\end{proof}

  We are now ready to prove~\cref{theorem:completeness}.

\begin{proof}[Proof of \cref{theorem:completeness}]
  Let $S_0 = S$ and $\metasubst{\theta_0} = \metasubst{\rho \circ \theta}$, where $\metasubst{\rho}$ is some renaming substitution such that every metavariable occurring in $\metasubst{\theta_0} S_0$ does not occur in $S_0$. Note that $\metasubst{\theta_0}$ is an $E$-unifier of $S$, since $\metasubst{\theta}$ is by assumption.

  We now inductively define $S_i, \metasubst{\xi_i}$, and $\metasubst{\theta_i}$ until we reach some $i$ such that $S_i = \varnothing$. We ensure that $\ord(S_i, \metasubst{\theta_i})$ decreases with every step, so that such sequence of steps would always terminate. We maintain the following invariants for each step:
  \begin{enumerate}
    \item $\langle S_i, \metasubst{\theta_i} \rangle \stackrel{\metasubst{\xi_i}}{\longrightarrow} \langle S_{i+1}, \metasubst{\theta_{i+1}} \rangle$ where $S_i \stackrel{\metasubst{\xi_i}}{\longrightarrow} S_{i+1}$ by some rule of the unification procedure;
    \item $\ord(S_{i+1}, \metasubst{\theta_{i+1}}) < \ord(S_i, \metasubst{\theta_i})$;
    \item $\metasubst{\theta_i} \in U_E(S_i)$;
    \item $\metasubst{\theta_0} \equiv_E \metasubst{\theta_i \circ \xi_{i-1} \circ \ldots \circ \xi_0}$;
    \item every free variable occurring in $\metasubst{\theta_i} S_i$ does not occur in $S_i$;
  \end{enumerate}
  If $S_i \not= \varnothing$, then let $\forall \overline{x} : \type{\overline{\sigma}}. s \meq t : \type{\tau}$ be a constraint in $S_i$. We consider three major cases with respect to the rewriting sequence $\Theta_i \mid \overline{x} : \type{\overline{\sigma}} \vdash \metasubst{\theta_i} s \stackrel{*}{\longleftrightarrow}_E \metasubst{\theta_i} t : \type{\tau}$:
  \begin{enumerate}
    \item \textbf{The rewriting sequence contains a root rewrite.}
      More precisely, there exists a sequence $\metasubst{\theta_i}s = u_0 \longleftrightarrow_E \ldots \longleftrightarrow_E u_n = \metasubst{\theta_i}t$ and some term $u_j$ such that $u_j \longleftrightarrow_E u_{j+1}$ is a direct application of a rewrite using an axiom.
      This means that $s$ and $t$ can be unified by a direct use of an axiom. More specifically,
      there exists an instantiation $\metasubst{\zeta}$ of an axiom $\Xi \mid \cdots \vdash l \equiv r : \type{\tau}$ from $E$ such that $\metasubst{\zeta}l = u_{j}$, $u_{j+1} = \metasubst{\zeta}r$, and $\metasubst{\theta_i}$ unifies both $s \meq u_j$ and $u_{j+1} \meq t$. Thus, we can apply \textbf{(mutate)} rule.
      The measure decreases since the rewrite sequence $s \stackrel{*}{\longleftrightarrow}_E t$ is now split into two sequences $s \stackrel{*}{\longleftrightarrow} u_j$ and $u_{j+1} \stackrel{*}{\longleftrightarrow}_E t$ such that sum of lengths of new sequences is exactly one less than the length of the original sequence.

  \item \textbf{Rewriting sequence is empty or does not contain a root rewrite.}
    This means that rewrites may only happen in subterms.
      \begin{enumerate}
        \item If $s = x$ and $t = y$ where $x$ and $y$ are variables,
          then $x = y$ and we can apply \textbf{(delete)} rule, with $\metasubst{\xi_i} = \metasubst{\mathsf{id}}$ and $\metasubst{\theta_{i+1}} = \metasubst{\theta_i}$.
          The measure is reduced since the total size of constraints is reduced, while the rewriting sequences and the remaining substitution remain the same.

        \item \label{proof:decompose} If $s = \con{F}(\overline{\overline{z}.u})$ and $t = \con{F}(\overline{\overline{z}.v})$,
          then, $\metasubst{\theta_i}s = \con{F}(\overline{\overline{z}.\metasubst{\theta_i}u})$
          and $\metasubst{\theta_i}t = \con{F}(\overline{\overline{z}.\metasubst{\theta_i}v})$.
          Since there are no root rewrites, $\metasubst{\theta_i}$ unifies each pair $u_j \meq v_j$ in corresponding extended contexts,
          so we can apply \textbf{(decompose)} rule with $\metasubst{\xi_i} = \metasubst{\mathsf{id}}$ and $\metasubst{\theta_{i+1}} = \metasubst{\theta_i}$.
          Note that the chain of rewrites may be split into several chains, but the total sum of lengths remains the same. Second component of the measure also remains unchanged. We reduce the third component of the measure, since the total size of terms in the unification problem decreases, the sum of chains of rewrites is unchanged.

        \item \label{proof:project} If $s = \m[\overline{u}]$ and $\m$ is projective at $j$ relative to $\metasubst{\theta_i}$ then we can apply \textbf{(project)} rule with $\metasubst{\xi_i} = [\m[\overline{z}] \mapsto z_j]$.
          Note that the chain of rewrites remains unchanged and $\metasubst{\xi_i}$ does not take any operators away from $\metasubst{\theta_{i+1}}$ (which is a restriction of $\metasubst{\theta_i}$ to metavariables other than $\m$). We reduce the measure by reducing the total size of terms in the unification problem.

        \item \label{proof:imitate} If $s = \m[\overline{u}]$ where $\m$ is not projective relative to $\metasubst{\theta_i}$
          and $\metasubst{\theta_i}s = \con{F}(\overline{\overline{z}.u})$
          and $t = \con{F}(\overline{\overline{z}.v})$,
          then $\metasubst{\theta_{i}}$ unifies each pair $u_i \meq \metasubst{\zeta}v_i$ in corresponding extended contexts
          and we can apply \textbf{(imitate)} rule
          with $\metasubst{\xi_i} = [\m[\overline{z}] \mapsto \con{F}(\overline{\overline{x}.\metavar{t}{}[\overline{z},\overline{x}]})]$.
          Let $\metasubst{\theta_i}\m[\overline{z}] = \con{F}(\overline{\overline{x}.w})$, then $\metasubst{\theta_{i+1}}$ is constructed from $\metasubst{\theta_i}$ by removing mapping for $\m$ and adding mappings $[\metavar{t}{_j}[\overline{z},\overline{x_j}] \mapsto w_j]$ for all $j$.
          The chain of rewrites is unchanged and the measure decreases since we reduce the number of operators in $\metasubst{\theta_{i+1}}$.

        \item \label{proof:iterate} If $s = \m[\overline{u}]$ where $\m$ is not projective relative to $\metasubst{\theta_i}$ and $\metasubst{\theta_i}s \stackrel{*}{\longleftrightarrow}_E \metasubst{\theta_i}t$ contains a rewrite of a subterm $w$ in $\metasubst{\theta_i}s$ that is not a subterm of an occurrence of $\metasubst{\theta_i}u_i$ for some $i$, then 

          \begin{enumerate}
            \item If $w$ is under a binder in $\metasubst{\theta_i}s$, we take the outermost operator $\con{F}$ that binds a variable captured by $w$ (that is, $\metasubst{\theta_i}s = \ldots \con{F}(\overline{y_1}.s_1, \overline{y_j}. \ldots w \ldots, \overline{y_n}. s_n)$) and apply \textbf{(iterate)} rule
              with $\metasubst{\xi_i} = [\m[\overline{z}] \mapsto \metavar{m}{'}[\overline{z}, \con{F}(\overline{y_1}.\metavar{m}{_1}[\overline{z}, \overline{y_1}], \ldots, \overline{y_n}.\metavar{m}{_n}[\overline{z}, \overline{y_n}])]$.
              Let $\metasubst{\theta_i}\m[\overline{z}] = \con{F}(\overline{\overline{y}.s'})$,
              then $\metasubst{\theta_{i+1}}$ is defined as $\metasubst{\theta_i}$
              with mapping of $\m$ removed and added mappings
              $[\metavar{m}{_i}[\overline{z}, \overline{y_i}] \mapsto s_i]^{\{i \in \{1, \ldots, n\}\}}$.
              The chain of rewrites remains unchanged and the number of operators in $\metasubst{\theta_{i+1}}$ decreases by one, so the measure decreases.

            \item If $w = \con{F}(\ldots)$ and is not under a binder, then we apply \textbf{(iterate)} rule
              with $\metasubst{\xi_i} = [\m[\overline{z}] \mapsto \metavar{m}{'}[\overline{z}, \con{F}(\overline{y_1}.\metavar{m}{_1}[\overline{z}, \overline{y_1}], \ldots, \overline{y_n}.\metavar{m}{_n}[\overline{z}, \overline{y_n}])]$.
              We define $\metasubst{\theta_{i+1}}$ and show that the measure decreases analogously to the previous case.
            \item \label{proof:generalize} If $w = \metavar{w}{}[\overline{v}]$ and is not under a binder,
              then $\metasubst{\theta_i}\m[\overline{z}]$ contains $w' = \metavar{w}{}[\overline{v'}]$
              as a subterm
              and $v'_i[\overline{z} \mapsto \overline{u}] = v_i$ for all $i$
              (this is because $w$ is not a subterm of any of the $\metasubst{\theta_i}u_j$).
              Since $w'$ is also not under binder,
              then by~\cref{lemma:subterm-no-mixed} and assumption of no mixed operators
              we have that there exist terms $h$, $\overline{s}$, and a substitution
              $\metasubst{\zeta} = [\metavar{h}{}[\overline{z}, \overline{y}] \mapsto h]$
              such that $\metasubst{\zeta}h[\overline{v'}, \overline{s}] = \metasubst{\theta_i}\m[\overline{z}]$.
              Set $\theta_i' = [\m[\overline{z}] \mapsto h[\overline{y} \mapsto \overline{s}]]$.
              We have $\metasubst{\theta_i} = \metasubst{\zeta \circ \theta_i'}$,
              that is $\theta_i'$ is more general that $\theta_i$ modulo $E$.
              The rewriting sequence remains unchanged.
              If $\metasubst{\theta_i}s$ has an operator at root,
              then $\metasubst{\theta_i'}$ has fewer operators which decreases the measure.
              If $\metasubst{\theta_i}s$ has a metavariable at root,
              then $\metasubst{\theta_i'}$ has fewer metavariables which decreases the measure.
          \end{enumerate}

        \item \label{proof:identify}
  If $s = \m[\overline{u}]$ and $t = \metavar{n}{}[\overline{v}]$
  where $\m \not= \metavar{n}{}$, both $\m$ and $\metavar{n}{}$
  are not projective relative to $\metasubst{\theta_i}$
  and $\metasubst{\theta_i}s \stackrel{*}{\longleftrightarrow}_E \metasubst{\theta_i}t$
  corresponds to the union of independent rewritings
    $s_i \stackrel{*}{\longleftrightarrow}_E t_i$ for all $i \in \{1, \ldots, n\}$
  such that for all $i$ we have
    either that $s_i$ is a subterm of an occurrence of $\metasubst{\zeta}u_{j_i}$
    or that     $t_i$ is a subterm of an occurrence of $\metasubst{\zeta}v_{j_i}$,
    then by~\cref{lemma:identify} there exist terms $w$, $\overline{u'}$, $\overline{v'}$ such that
    $\metasubst{\theta_i}\m[\overline{z}] = w[\overline{y} \mapsto \overline{v'}]$
    and
    $\metasubst{\theta_i}\metavar{n}{}[\overline{y}] = w[\overline{z} \mapsto \overline{u'}]$.
    We now can apply \textbf{(identify)} rule with
    $\xi_i = [\m[\overline{z}] \mapsto \metavar{w}{}[\overline{y} \mapsto \overline{\metavar{y}{}[\overline{z}]}]
    \metavar{n}{}[\overline{y}] \mapsto \metavar{w}{}[\overline{z} \mapsto \overline{\metavar{z}{}[\overline{y}]}]]$.
   We define $\metasubst{\theta_{i+1}}$ to be defined as $\metasubst{\theta_i}$ without mappings for $\m$ and $\metavar{n}{}$ and with added mappings
   $[\metavar{w}{}[\overline{z}, \overline{y}] \mapsto w,
     \overline{\metavar{y}{}[\overline{z}]} \mapsto \overline{v'},
     \overline{\metavar{z}{}[\overline{y}]} \mapsto \overline{u'}]$.
     The rewriting sequence remains unchanged, but by~\cref{lemma:identify} the term $w$ is not a variable, so there is an operator or a metavariable that was mentioned twice in $\metasubst{\theta_i}$ (once for $\m$ and once for $\metavar{n}{}$) and is now mentioned once in $\metasubst{\theta_{i+1}}$ (for $\metavar{w}{}$), so the number of operators or metavariables in $\metasubst{\theta_{i+1}}$ decreases by at least 1. Thus, the measure decreases.

        \item \label{proof:eliminate}
  If $s = \m[\overline{u}]$ and $t = \m[\overline{v}]$
  where $\m$ is not projective relative to $\metasubst{\theta_i}$
  and $\metasubst{\theta_i}s \stackrel{*}{\longleftrightarrow}_E \metasubst{\theta_i}t$
  corresponds to the union of independent rewritings
    $s_i \stackrel{*}{\longleftrightarrow}_E t_i$ for all $i \in \{1, \ldots, n\}$
  such that for all $i$ we have
    either that $s_i$ is a subterm of an occurrence of $\metasubst{\zeta}u_{j_i}$
    or that     $t_i$ is a subterm of an occurrence of $\metasubst{\zeta}v_{j_i}$,
    then by~\cref{lemma:eliminate}
    we have $\overline{z'} = FV(\metasubst{\theta_i}\m[\overline{z}]) \subseteq \overline{z}$
    such that for each $z_k \in \overline{z'}$ we have $\metasubst{\theta_i} u_k \equiv_E \metasubst{\theta_i} v_k$. Consider two subcases:
    \begin{enumerate}

      \item If $\overline{z'} = \overline{z}$ we apply \textbf{(decompose)} rule
        with $\xi_i = \metasubst{\mathsf{id}}$ and $\metasubst{\theta_{i+1}} = \metasubst{\theta_i}$.
        The chain of rewrites remains, the remaining substitution is unchanged, but the total size of
        constraints is reduced, so the measure decreases.

      \item If $\overline{z'} \subset \overline{z}$ we apply \textbf{(eliminate)} rule
        with $\metasubst{\xi_i} = [\m[\overline{z}] \mapsto \metavar{e}{}[\overline{z'}]]$
        and $\metasubst{\theta_{i+1}}$ defined as a version of $\metasubst{\theta_i}$ with removed mapping for $\m$ and added mapping $[\metavar{e}{}[\overline{z'}] \mapsto \metasubst{\theta_i}s]$.
        Note that the chain of rewrites remains unchanged and $\metasubst{\xi_i}$ does not take any operators away from $\metasubst{\theta_{i+1}}$. We reduce the measure by reducing the total size of terms in the unification problem (as at least one parameter is removed from at least one metavariable $\m$).

    \end{enumerate}

      \end{enumerate}

  \end{enumerate}

  We now have a sequence $\langle S_0, \metasubst{\theta_0} \rangle \stackrel{\metasubst{\xi_0}}{\longrightarrow} \langle S_1, \metasubst{\theta_1} \rangle \stackrel{\metasubst{\xi_1}}{\longrightarrow} \ldots$. The sequence is finite since the measure $\ord(S_i, \metasubst{\theta_i})$ strictly decreases with every step. Therefore, $\langle S, \metasubst{\theta_0} \rangle \stackrel{\metasubst{\xi_0}}{\longrightarrow} \ldots \stackrel{\metasubst{\xi_n}}{\longrightarrow} \langle \varnothing, \metasubst{\mathsf{id}} \rangle$ and $\metasubst{\theta} \equiv_E \metasubst{\rho^{-1} \circ \theta_i \circ \xi_{i-1} \circ \ldots \xi_0} \equiv_E \metasubst{\rho^{-1} \circ \xi_n \circ \ldots \circ \xi_0} \preccurlyeq_E \metasubst{\xi_n \circ \ldots \circ \xi_0}$, completing the proof.
\end{proof}

\section{Discussion}
\label{section:discussion}

A pragmatic implementation of our procedure may enjoy the following changes.
We find that these help make a reasonable compromise between completeness and performance:
\begin{enumerate}
  \item remove \textbf{(iterate)} rule; this rule sacrifices completeness, but helps significantly reduce non-determinism; the solutions lost are also often highly non-trivial and might be unwanted in certain applications such as type inference;
  \item implement \textbf{(eliminate*)} rule;
  \item split axioms $E = B \uplus R$ such that $R$ constitutes a confluent and terminating term rewriting system, and introduce \textbf{(normalize)} rule to normalize terms (lazily) before applying any other rules except \textbf{(delete)} and \textbf{(eliminate*)};
  \item introduce a limit on a number of applications of \textbf{(mutate)} rule;
  \item introduce a limit on a number of bindings that do not decrease problem size;
  \item introduce a limit on total number of bindings.
\end{enumerate}

When adapting ideas from classical $E$-unification and HOU, some technical difficulties arise from having binders lacking (in general) the nice syntactic properties of $\lambda$-calculus. These difficulties affect both the design of our unification procedure, leading to some simplifications, and the completeness proof, requiring us to find a different approach to define the measure and consider cases that do not have analogues.

In the procedure, we had to simplify whenever those ideas relied on normalisation, $\eta$-expansion, or specific syntax of $\lambda$-terms. Many HOU algorithms look at syntactic properties of terms to determine which rules to apply. In particular, HOU algorithms often distinguish \emph{flex} and \emph{rigid} terms \cite{Huet1975, Miller1991}. Jensen and Pietrzykowski introduce a notion of \emph{$\omega$-simple} terms \cite{JensenPietrzykowski1976}. Vukmirovic, Bentkamp, and Nummelin~\cite{VukmirovicBentkampNummelin2021} introduce notions of \emph{base-simple} and \emph{solid} terms. These properties crucially depend on specific normalisation properties of $\lambda$-calculus, which might be inaccessible in an arbitrary second-order equational theory. Thus, our procedure contains more non-determinism than might be necessary.

One notable example of such simplication is in the imitation and projection bindings. In HOU algorithms, it is common to have substitutions of the form
\[
  [\m \mapsto \lambda x_1, \ldots, x_n. f \; (\metavar{h}{_1} \; x_1 \; \ldots \; x_n) \; \ldots \; (\metavar{h}{_k} \; x_1 \; \ldots \; x_n)]
\]
where $f$ can be a bound variable (one of $x_1, \ldots, x_n$) or a constant of type $\type{\sigma_1 \Rightarrow \ldots \Rightarrow \sigma_k \Rightarrow \tau}$. These are called Huet-style projection or imitation bindings~\cite{JensenPietrzykowski1976,VukmirovicBentkampNummelin2021} or partial bindings~\cite{Huet1975,Miller1991}. Huet-style projections (and conditions prompting their use) are non-trivial to generalise well to arbitrary second-order abstract syntax, so we skipped them in this paper, opting out for simpler rules but larger search space.

In the completeness proof for HOU algorithms, the syntactic properties of $\lambda$-calculus are heavily exploited. Their inaccessibility in a general second-order equational theory has contributed to some difficulties when developing the proof of completeness in \cref{theorem:completeness}. Perhaps, the most challenging of all was handling of the \cref{proof:generalize} of the proof which requires the assumption of no mixed operators and \cref{lemma:subterm-no-mixed}. These do not appear to have an analogue in completeness proofs for HOU or first-order E-unification.

\section{Conclusion and Future Work}

We have formulated the equational unification problem for second-order abstract syntax, allowing to reason naturally about unification of terms in languages with binding constructions. Such languages include, but are not limited to higher-order systems such as $\lambda$-calculus, which expands potential applications to more languages. We also presented a procedure to solve such problems and our main result shows completeness of this procedure.

In future work, we will focus on optimisations and recognition of decidable fragments of $E$-unification over second-order equations.

One notable optimisation is splitting $E$ into two sets $R \uplus B$, where $R$ is a set of directed equations, forming a confluent second-order term rewriting system, and $B$ is a set of undirected equations (such as associativity and commutativity axioms).

Another potential optimisation stems from a generalisation of Huet-style binding (also known as \emph{partial binding}), which can lead to more informed decisions on which rule to apply in the procedure, introduce Huet-style version of \textbf{(project)} and improve \textbf{(iterate)} rule, significantly reducing the search space. A version of such an optimisation has been implemented in a form of a heuristic to combine \textbf{(imitate)} and \textbf{(project)} rules by Kudasov~\cite{Kudasov2022}.

There are several well-studied fragments both for $E$-unification and higher-order unification. For example, unification in monoidal theories is essentially solving linear equations over semirings~\cite{Nutt1992}. In higher-order unification, there are several well-known decidable fragments such as pattern unification~\cite{Miller1991}. Vukmirovic, Bentkamp, and Nummelin have identified some of the practically important decidable fragments as well as a new one in their recent work on efficient full higher-order unification~\cite{VukmirovicBentkampNummelin2021}. It is interesting to see if these fragments could be generalised to second-order abstract syntax and used as oracles, possibly yielding an efficient $E$-unification for second-order abstract syntax as a strict generalisation of their procedure.

\bibliography{ms}


\end{document}